\DeclareMathAlphabet{\mathcalligra}{T1}{calligra}{m}{n}
\newcommand{\E}{\mathbb{E}}
\newcommand{\triplenorm}[1]{{\vert\kern-0.25ex\vert\kern-0.25ex\vert #1 
    \vert\kern-0.25ex\vert\kern-0.25ex\vert}}
\newtheorem{thm}{Theorem}[section]
\newtheorem{prop}{Proposition}[section]
\newtheorem{lemma}{Lemma}[section]
\theoremstyle{definition}
\newtheorem{definition}{Definition}[section]
\newtheorem{asmp}{Assumption}
\newtheorem{example}{Example}[section]
\newtheorem{remark}{Remark}
\newcommand{\xdownarrow}[1]{%
  {\left\downarrow\vbox to #1{}\right.\kern-\nulldelimiterspace}
}
\begin{document}

\title{Relative Arbitrage Opportunities in an Extended Mean Field System}

\author*[1]{\fnm{Nicole Tianjiao} \sur{Yang}}\email{\href{mailto:nicoley031@gmail.com}{nicole.yang@utk.edu}}

\author[2]{\fnm{Tomoyuki} \sur{Ichiba}}\email{\href{mailto:ichiba@pstat.ucsb.edu}{ichiba@pstat.ucsb.edu}}

\affil*[1]{\orgdiv{Department of Mathematics}, \orgname{University of Tennessee, Knoxville}, \orgaddress{\postcode{37996}, \state{TN}, \country{USA}}}

\affil[2]{\orgdiv{Department of Statistics and Applied Probability}, \orgname{University of California, Santa Barbara}, \orgaddress{\street{South Hall}, \city{Santa Barbara}, \postcode{93106}, \state{CA}, \country{USA}}}

\abstract{
This paper studies relative arbitrage opportunities in a market with competitive investors through stochastic differential games in the limit as the number of players tends to infinity. {With common noises introduced by the stock capitalization processes, we establish a conditional McKean-Vlasov system to study the market dynamics coupled to the expected trading volume of investors. We show that optimal arbitrage can be characterized as a solution of a Cauchy PDE constructed by the volatility terms in the market model. The structure of the market dynamics can be relaxed, and we provide a theoretical framework to study a general mean-field system, where the interaction is characterized by a joint distribution of wealth and strategies. In this setting, the optimal relative arbitrage constitutes the strong equilibrium of an extended mean-field game. We provide conditions for the existence and uniqueness of the mean-field equilibrium. We further prove the propagation of chaos result for the finite-player game counterpart, and demonstrate that the Nash equilibrium converges to the mean field equilibrium when the population grows to infinity.}}

\keywords{Mean Field Game, Stochastic Portfolio Theory, relative arbitrage, McKean-Vlasov system}



\maketitle

\section{Introduction}

We consider a large number of non-cooperative active money managers with interactions, each of whom aims to outperform a benchmark portfolio over a given time horizon. This outperformance, termed relative arbitrage, is a significant topic in Stochastic Portfolio Theory (SPT)  \cite{fernholz2002stochastic}. 
The optimal arbitrage and hedging opportunities in the complete market are studied in \cite{fernholz2010optimal, fernholz2011optimal, ruf2011optimal}. Functionally generated portfolios introduced in SPT are used to construct portfolios with favored return characteristics. 
It guarantees the relative return of portfolios generated by appropriate deterministic functions of the market portfolio. 
The optimization problem from the functional generated portfolio point of view is handled in \cite{wong2015optimization}. The paper \cite{wong} connects functional portfolio generation with information theory and relative arbitrage.


{
In this work, we investigate the relative arbitrage problem in a mean-field game (\cite{lasry2007mean, huang2006large}) regime and study the limiting behavior of the market and investors when the number of investors tends to infinity.  Each investor has his or her own performance criterion, by which their growth rate should exceed the benchmark with probability one that the investor considers the investment successful. The previous work \cite{ichiba2020relative} investigates optimal arbitrage opportunities when there are interactions among a large, finite number of investors. Despite a well-posed interacting dynamical system and the Nash equilibrium established, it has been demonstrated in the work that, due to the complicated dynamical system, achieving the uniqueness of Nash equilibrium requires a sufficiently large number of players in such stochastic games, or a short time horizon. Meanwhile, it is unlikely to achieve a tractable equilibrium, especially when the number of players $N$ is large.}


{
Thus, we discuss the optimization of the relative arbitrage quantity and the corresponding strategies considering the formulation of mean-field games with common noise in the market, and justify the mean-field approximation to a finite-player relative arbitrage problem (\cite{ichiba2020relative}). In particular, the arbitrage we consider is relative to a benchmark that includes the market capitalization and wealth of a continuum of investors. Hence, the game formulation is of the extended mean field game type \cite[Volume I, Section 4.6]{carmona2018probabilistic}, as the interaction among investors is through a joint distribution of their investment strategies and wealth. We show that the optimal arbitrage function, which is the minimal nonnegative solution of an associated Cauchy PDE, acts as the value function of the mean-field game we consider. We then derive the propagation of chaos results so that a player in a large game limit feels the presence of other players through the statistical distribution of states and actions. The equilibria of the $N$-player games can be shown to converge to the MFG limit. More importantly, this can be used as an approximation of the realistic finite player game, and shed light on the information of a complex market system.}

There are several reasons for this work that differentiate it from a traditional mean-field game setup. Firstly, instead of solving the value function at equilibrium through a (stochastic) HJB equation, or construct a system of FBSDEs, we characterize the optimal value function as the solution of the minimal nonnegative solution of a Cauchy PDE. The coefficient in the Cauchy PDE is still coupled with the wealth and strategies of investors. Secondly, in the spirit of functional generated portfolio, we derive that the uniqueness of mean field equilibrium depends on the expectation of arbitrage preference and the weight of market capitalization in the relative benchmark. Third, instead of perfectly symmetric stochastic games, we differentiate the players by their preference to outperform the relative benchmark. 

Finally, we elucidate the approximation between the finite and the limiting systems in the following diagram.
\[
\text{Market dynamics} \hspace{1.5cm} \text{Relative arbitrage of $N$ investors} 
\]
\[
\bigg\downarrow \hspace{5cm} \quad \bigg\downarrow 
\]
\[
N-\text{particle dynamics} \overset{\cite{ichiba2020relative}}{\xrightarrow{\hspace{1.7cm}}} N \text{-player Nash Equilibrium} 
\]
\[
\overset{Proposition~\ref{xlimit}-\ref{prop: tightv}}{\bigg\downarrow} \hspace{3.2cm} \quad  \overset{Theorem~\ref{tight}-\ref{netomfe}}{\bigg\uparrow} 
\quad \quad
\]
\[
\text{$\infty$-particle dynamics} \overset{Theorem~\ref{mfeuniq} }{\xrightarrow{\hspace{1.5cm}}} \text{Mean Field Equilibrium} 
\]
\bigskip

\subsection{Related work}

\noindent \textbf{Relative arbitrage and functionally generated portfolio} The optimal arbitrage and hedging opportunities of a single investor in a complete market are studied in \cite{fernholz2010optimal, fernholz2011optimal, ruf2011optimal}. The previous work \cite{ichiba2020relative} follows this framework and extends the optimization problem to a market with a given finite number of investors interacted through the equity market.
From the perspective of functionally generated portfolios, \cite{wong2015optimization, campbell2022functional} and the references therein consider relative arbitrage opportunities by optimizing the functionally generated portfolios. In particular, \cite{wong2015universal, cuchiero2019cover} extends Cover's universal portfolio (\cite{cover1991universal}) approach to families of functionally generated portfolios. All of these works benefit from their constructions of the optimization problem so that no drift estimation is required. This is opposed to classical portfolio optimization methods, such as expected utility maximization. We demonstrate the connection of the above two directions to find optimal relative arbitrage in this work.

\noindent \textbf{Mean field games and controls} Mean field games are grounded in the premise of the convergence of Nash equilibria of large stochastic differential games with a mean-field interaction. See \cite{carmona2018probabilistic} and the references therein for the theory and applications of mean field games. A player in a large game limit should feel the presence of other players through the statistical distribution of states and actions. In the mean field game formulation, we pursue the equilibrium for a so-called representative player, which is a generic player in the infinite-population. For this reason, the MFG framework is expected to be more tractable than $N$-player games. This is one of our motivations to study relative arbitrage opportunities in the infinite-investor regime. Among the fruitful developments of mean field theory and its applications, the mean field game with common noise, or the extended mean field games, remains a challenging topic. The input of an extended mean field game is parameterized by an external random environment. Recent developments such as \cite{carmona2015probabilistic,carmona2016mean, djete2022extended, djete2023mean, 10.1214/22-AAP1876} focus on mean field games and controls with common noise and construct a weak notion of mean field equilibrium. The mean field games with common noise but without idiosyncratic noise are studied by \cite{cardaliaguet2022first} from the PDE perspective. A recent work \cite{mou2023minimal} proposes a partial order for the set of measure flows when the monotonicity condition is not satisfied for the extended mean field game. In this setup, the minimal and maximal mean-field equilibria are constructed. The regime of $N$-player and common noise mean field games has been used in \cite{lacker2019mean, hu2022n} to study portfolio optimization, with players subject to the CARA or CARR utilities. 


\subsection{Organization}
The organization of the paper is as follows. Section~\ref{fds} introduces the market setup, dynamics of the investors as well as the mean field interactions. Section~\ref{sec: ra} discusses the optimization of relative arbitrage opportunities under the Markovian market model. {Section~\ref{sec: mfg} establishes the optimization of relative arbitrage as searching for Nash equilibrium in extended mean-field games. Section~\ref{lims} constructs a conditional McKean-Vlasov SDE of the form that the coefficients of diffusion depend on the joint distribution of the state processes and the control, and shows the propagation of chaos holds to provide proofs of the representative agent model used in previous sections. We justify that the mean-field formulation is an appropriate generalization of the $N$-player relative arbitrage problem. In Section~\ref{sec: discussion}, we discuss potential topics that can be extended from this paper. Some proofs are placed in Appendix~\ref{sec:appendix}. }

\subsection{Notations}
{
For easy reference, we summarize the following important quantities in this paper.
\begin{table}[h!]
\centering
\begin{tabular}{lll}
\toprule
\textbf{Notation} & \textbf{Description} & \textbf{Defined in}\\
\midrule
$n$ & Total number of stocks in the equity market & Section~\ref{sec: cap}\\
$V^\ell$ & Wealth process of investor $\ell$ & Section~\ref{sec:investors}\\
$\pi^{\ell}$ & Strategy profile of investor $\ell$& Section~\ref{sec:investors}\\
$(\mathcal{X},\mathcal{Z}), \ (\mathbf{x}, \mathbf{z})$ & Interacting market system and its initial condition & Section~\ref{sec: marketz}\\
$\theta, \ L$ & Market price of risk, Stochastic discount factor & Section~\ref{sec: marketz}\\
$\mathcal V$ & Benchmark that investor compare their wealth to & Section~\ref{sec: benchmark}\\
$u^{\ell}$ & Optimal arbitrage quantity of investor $\ell$ & Section~\ref{sec: u}\\
$J^{\ell}$, $J$ & Cost functional of investor $\ell$ and of the representative player & Section~\ref{exdmfg}\\
$\mu, \ \nu$ & Path measures that represent mean field interactions & Section~\ref{exdmfg} \& \ref{lims} \\
$\Phi$ & The best response map of strategy trajectories& Section~\ref{sec: solveNE}\\
Superscript $N$ & Denotes the counterparts in the $N$-player game & Section~\ref{limitsection}\\
\bottomrule
\end{tabular}
\caption{Glossary of main notations, brief description of their meanings, and where each term is first defined.}
\end{table}
}

\section{The Market Model}
\label{fds}
\numberwithin{equation}{section}
We consider an equity market and focus on the market dynamics and behaviors of a group of investors in this market. The number of investors that we include is large enough to affect the market as a whole.
\subsection{Capitalizations}
\label{sec: cap}
For a given finite time horizon $[0,T]$, an admissible market model $\mathcal{M}$ we use in this paper consists of a given $n$ dimensional standard Brownian motion $B(\cdot) := (B_1(\cdot), \ldots, B_n(\cdot))^{\prime}$ on a canonical filtered probability space $(\Omega, \mathcal{F}, \mathbb{F}, \mathbb{P})$. 
Filtration $\mathbb{F}$ represents the ``flow of information'' in the market, where
{$\mathbb{F} = \{\mathcal{F}^B(t)\}_{0 \leq t < \infty}$} 
and $\mathcal{F}^B(t) := \{\sigma(B(s)) ; 0 < s < t\}_{0 \leq t < \infty} $ with $\mathcal{F}^B(0) := \{\emptyset, \Omega\}$, mod $\mathbb{P}$. All local martingales and supermartingales are with respect to the filtration $\mathbb{F}$ if not written specifically.
Thus, there are $n$ risky assets (stocks) with prices per share $\mathcal{X}(\cdot) = (X_1(\cdot), \ldots, X_n(\cdot))^{\prime}$ driven by $n$ independent Brownian motions in the following system of stochastic differential equations on the previously defined probability space: for $t \in [0,T]$, $\omega \in \Omega$,
\begin{equation}
\label{eq: x0}
dX_i(t) = X_i(t)(\beta_i(t, \omega) dt + \sum_{k=1}^n \sigma_{ik}(t, \omega) dB_k(t)),\ \  i= 1, \ldots, n,
\end{equation}
 with initial condition $X_i(0) = x_i$. In this paper, we assume that $\text{dim}(B(\cdot)) = \text{dim}(\mathcal{X}(\cdot)) = n$, that is, we have exactly as many sources of randomness as there are stocks on the market $\mathcal{M}$. Here, $^\prime$ stands for the transpose of matrices. $\beta(\cdot) = (\beta_1(\cdot), \ldots, \beta_n(\cdot))' : [0, T] \times \Omega \rightarrow \mathbb{R}^n$ as the mean rates of return for $n$ stocks and $\sigma(\cdot) = (\sigma_{ik}(\cdot))_{n \times n} : [0, T] \times \Omega \rightarrow \text{GL}(n)$ as volatilities are assumed to be invertible, $\mathbb{F}$-progressively measurable in which $\text{GL}(n)$ is the space of $n \times n$ invertible real matrices. Then $B(\cdot)$ is adapted to $\mathbb{P}$-augmentation of the filtration $\mathbb{F}$. To satisfy the integrability condition, we assume
\begin{equation}
\label{xcond}
 \sum_{i=1}^n \int_0^T \bigg (|\beta_i(t,\omega)|+ \alpha_{ii}(t,\omega) \bigg)dt < \infty, 
\end{equation}
where $\alpha(\cdot) := \sigma(\cdot)\sigma'(\cdot)$, and its $i,j$ element $\alpha_{i,j}(\cdot)$ is the covariance process between the logarithms of $X_i$ and $X_{j}$ for $1\le i,j \le n$. The market $\mathcal{M}$ is therefore a complete market. 

\subsection{Investors}
\label{sec:investors}

Next, we define the admissible strategies used by the investors and their wealth. 
The filtered probability space we consider 
should support Brownian motion $B$ and {a random vector of type $(v_0, c)$ of a generic investor, independent of $B$, by extension of the probability space. Here, investor $\ell$'s type is determined at time $0$ as a realization $(v_0^{\ell}, c_{\ell})$ of the random vector, representing the initial wealth and personal preference of this investor. We explain the exact definition of $c$ in relative arbitrage in Section~\ref{sec: ra}.}
More specifically, we define the filtration $(\mathcal{F}^{MF}_t)_{t\in[0,T]}$ that represents the ``flow of information'' in the market and investors, where $\mathcal{F}^{MF} := \mathcal{F}^B \vee \sigma(\{v_0, c\})$.

\begin{definition}[Investment strategy]
\label{portfoliopi}
\begin{enumerate}[label={(\arabic*)}]
\item We call an $\mathbb{F}^{MF}$-progressively measurable $n$-dimensional process $\pi^\ell (\cdot)= (\pi^{\ell}_1(\cdot) , \ldots, \pi^{\ell}_n(\cdot) )'$ 
an {admissible} investment strategy if
\begin{equation}
\label{admpi}
\int_0^T (| (\pi^{\ell}(t, \omega))^{\prime}\beta(t, \omega)| + (\pi^{\ell}(t, \omega))^{\prime}\alpha(t, \omega)\pi^{\ell}(t, \omega) ) dt < \infty, \quad  \,  T\in (0,\infty),\,\, \text{a.e. } \, \omega \in \Omega.
\end{equation}
The strategy here is a self-financing portfolio, since wealth at any time is obtained by trading the initial wealth according to the strategy $\pi(\cdot)$.
We denote the admissible set by $\mathbb{A}$. In the remainder of the paper, we only consider strategies in the admissible set $\mathbb{A}$.

\item  
An investor $\ell$ uses the proportion $\pi_i^{\ell}(\cdot)$ of current wealth $V^{\ell}(\cdot)$ to invest in the stock $i$. The proportion $ \pi_0^{\ell}(\cdot) = 1-\sum_{i=1}^n \pi^{\ell}_i(\cdot)$ is on the money market for $\ell = 1, \ldots, N$. {In this paper, we consider admissible strategies that are bounded portfolios, i.e., $\pi^\ell$ takes values in a nonempty, compact, convex subset
$$K \subsetneq \{\pi = (\pi_1,...,\pi_n)\in \mathbb{R}^n \,| \pi_1 + \ldots +\pi_n = 1\}.$$
A special example is a long-only portfolio which takes values in the set
$$\Delta_n := \{\pi = (\pi_1,...,\pi_n)\in \mathbb{R}^n \,|\, \pi_1\geq 0, \ldots , \pi_n \geq 0\,; \,\pi_1 + \ldots +\pi_n = 1\}.$$}
\end{enumerate}
The dynamics of the wealth process $V^{\ell}(\cdot)$ of an individual investor $\ell$, invested in the stock market, is determined by 
\begin{equation}
\label{wealth}
    \frac{dV^{\ell}(t)}{V^{\ell}(t)} = \sum_{i=1}^n \pi_i^{\ell}(t) \frac{dX_i(t)}{X_i(t)}, \quad  V^{\ell}(0) = v^{\ell}_{0}.
\end{equation}
Consequently, 
\begin{equation}
     V^\ell(t) = v \exp \left\{ \int_0^{t} {\pi^{\ell}}'(s, \omega) \left(\beta(s, \omega) - \frac{\alpha(s, \omega)}{2} \right)ds + \int_0^t {\pi^{\ell}}'(s, \omega) \sigma(s, \omega) dB(s) \right\}. 
\end{equation}
\end{definition}
{The noise $B$ introduced in market dynamics is shared among investors. To align with the convention in mean field games literature, we refer to $B$ as the common noise, and we will construct the market model and optimization problem in later sections for a generic/representative player. Thus, we denote the wealth, the strategy, and the preference of this player as $V$, $\pi$, and $c$, respectively, without a superscript $\ell$. This representative player viewpoint will be justified in Section~\ref{sec: mfg}.}

One particular example is the {\it market portfolio} $\mathbf{m}$,
\begin{equation}
\label{Mportfolio}
\mathbf{m}_i(t) = \pi^{\mathbf{m}}_i(t) := \frac{X_i(t)}{X(t)}, \quad i=1, \ldots, n, \quad t \ge 0,
\end{equation}
where
\begin{equation} \label{Mportfolio-2}
X(t) := X_1(t) + \ldots + X_n(t), \quad t \in (0,T]; \quad X(0) := x_0 := x_1 + \cdots + x_n
\end{equation}
is the capitalization of the entire market.
Investing with a market portfolio amounts to the ownership of the entire market by investing in proportion to the market weight of each stock, as
\begin{equation}
\label{marketweal}
    \frac{d V^{\mathbf{m}}(t)}{V^{\mathbf{m}}(t)} = \sum_{i=1}^n \pi^{\mathbf{m}}_i(t) \cdot \frac{dX_i(t)}{X_i(t)} = \frac{d X(t)}{X(t)} , \quad t \ge 0. 
\end{equation}
We left the input of the market coefficients, $\omega$, in generic form for simplification. In the following section, we specify the market system adopted throughout this paper.

\subsection{Interacting market system}
\label{sec: marketz}

{Based on the supply and demand relationship for stock shares and equity capitalization, and the idea of factor models in the Capital Asset Pricing Models, we construct the capitalization process to depend on the average capital invested $\mathcal Z(t) := (\mathcal Z_{1}(t), \ldots, \mathcal Z_{n}(t) )^\prime $, $t \ge 0 $, as elaborated in the definition below. 
The factor $\mathcal Z(t)$ is an interaction term among investors. The market coefficients of the system $\{\mathcal{X}, \mathcal{Z}\}$ are thus assumed to be  \textit{time homogeneous} in the rest of the paper.} {Wealth and strategy arguments written without a superscript $\ell$ are those of a representative player.} {Note that an implicit assumption here is that we can use the expectation of the wealth (corresp. trading volume) of a representative player to represent the average wealth (corresp. trading volume) of the infinite number of investors. This setup will be explained in Section~\ref{sec: ra} and will be formally stated in Section~\ref{lims}.}
 

\begin{definition}[Market system]
\label{def:eqz}
For a given control process $\pi \in \mathbb A$, we denote the \textbf{interaction process} as $\mathcal{Z}_i(t) := \E[V(t) \pi_i(t) |\mathcal F^{B}_t]$, {which is the expectation of a representative player's trading volume.} Define $\mathcal Z(t) := (\mathcal{Z}_1(t), \ldots, \mathcal{Z}_n(t))$. Then, the market system we consider in this paper is a pair $\left(\mathcal{X}(t),\mathcal Z(t) \right) \in \mathbb R^n_+ \times \mathbb R_+$, for $t \in [0,T]$, that satisfies for $i= 1, \ldots, n,$
\begin{equation}
\label{eq: x}
dX_i(t) = X_i(t)(\beta_i(\mathcal{X}(t), \mathcal{Z}(t)) dt + \sum_{k=1}^n \sigma_{ik}(\mathcal{X}(t), \mathcal{Z}(t)) dB_k(t)),
\end{equation}
\begin{equation}
\label{eq: z_c}
    d \mathcal{Z}_i(t) = d \E[V(t) \pi_i(t) |\mathcal F^{B}_t] = \gamma_i(\mathcal{X}(t), \mathcal{Z}(t)) dt + \sum_{k=1}^n \tau_{ik}(\mathcal{X}(t), \mathcal{Z}(t)) dB_k(t),
\end{equation}

\end{definition}

{Note that we do not specify the form of $\gamma$ and $\tau$ in the above, as this would be determined from a fixed point argument specified in Section~\ref{sec: mfg}, through Mean Field Equilibrium. The strategies at the equilibrium will specify the dynamics defined above.}

Since $\sigma(\cdot)$ is invertible, there exists an $\mathbb{F}$-progressively measurable process $\theta(\cdot)$ such that for any $(\mathcal{X}(t), \mathcal{Z}(t)) \in \mathbb{R}_+^n \times \mathbb{R}_+^n$, $t \in (0, \infty)$,
$\sigma(\mathcal{X}(t), \mathcal{Z}(t)) \theta(\mathcal{X}(t), \mathcal{Z}(t)) = \beta(\mathcal{X}(t), \mathcal{Z}(t)),$
$\tau(\mathcal{X}(t), \mathcal{Z}(t)) \theta(\mathcal{X}(t), \mathcal{Z}(t)) = \gamma(\mathcal{X}(t), \mathcal{Z}(t))$, and
\[
\mathbb{P} \bigg( \int_0^T |\theta(\mathcal{X}(t), \mathcal{Z}(t))|^2 dt < \infty, \forall T \in (0, \infty) \bigg) = 1.
\]
In the scope of a complete market, the above shows that the price of risk process $\theta(t)$ governs both the risk premium per unit volatility of stocks and the trading volumes, since the market is simultaneously defined by the stocks and investors.
Now we can define the deflator $L(t)$ based on the market price of the risk process,
{
\begin{equation}
    \label{ltmg}
dL(t) = - \theta(t) L(t) dB(t), \, t > 0; \quad L(0) = 1.
\end{equation}
}
Note that, under the existence of a market price of risk process, the market is endowed with the existence of a \textit{local} martingale $L(\cdot)$ with $\E[L(T)] \leq 1$. {As opposed to the typical risk-neutral measure construction, we do not enforce the martingale assumption $\E[L(T)] = 1$ of such deflator (which can be guaranteed by Novikov's condition \cite{ikeda2014stochastic}, $\E\left[\exp \Big(\frac{1}{2} \int_0^T |\theta(t)|^2 dt \Big)  \right]<\infty$). When $L(\cdot)$ is a martingale, the market excludes arbitrage opportunities.}

We give the following Assumption~\ref{xv}  for the market capitalization, wealth and preference of investors.
\begin{asmp}
\label{xv}
We assume the following:
\begin{enumerate}[label=(\arabic*)]

\item \label{thomo}
The market coefficients $\beta(\cdot)$, $\sigma(\cdot)$, $\gamma(\cdot)$ and $\tau(\cdot)$ in \eqref{eq: x}-\eqref{eq: z_c} take values in $\mathbb{R}_+^n \times \mathbb{R}_+^n$, are time-homogeneous and the process $(\mathcal X(t), \mathcal{Z}(t)), t \ge 0  $ in Definition \ref{portfoliopi} is Markovian, i.e.,
\[
X_i(t) \beta_i(\mathcal{X}(t), \mathcal{Z}(t)) = b_i(\mathcal{X}(t), \mathcal{Z}(t)), \quad X_i(t) \sigma_{ik} (\mathcal{X}(t), \mathcal{Z}(t)) = s_{ik}(\mathcal{X}(t), \mathcal{Z}(t)),\] 
\[
\sum_{k=1}^n s_{ik}(\mathcal{X}(t), \mathcal{Z}(t))s_{jk}(\mathcal{X}(t), \mathcal{Z}(t)) = a_{ij}(\mathcal{X}(t), \mathcal{Z}(t)). 
\]

\item The Lipschitz continuity and linear growth condition are satisfied with Borel measurable mappings $b_{i}(x, z)$, $s_{ik}(x, z)$ in \eqref{eq: x}-\eqref{eq: z_c} from $ \mathbb{R}^n_+ \times \mathbb{R}^n_+$ to $\mathbb{R}^n$. That is, there exist some constants $C_L \in (0, \infty)$ and $C_G \in (0, \infty)$, such that
\begin{equation}
\label{bslip}
    |b(x, z) - b(\widetilde{x}, \widetilde{z})| + |s(x, z) - s(\widetilde{x}, \widetilde{z})| \leq C_L[|x - \widetilde{x}| +|z - \widetilde{z}| ] , 
\end{equation}
\[
|b(x, z)| + |s(x, z)| \leq C_G(1+|x|+|z|).
\]
for every $z, \widetilde{z} \in \mathbb{R}_+$, 
$x, \widetilde{x} \in \mathbb R_{+}^n$. We assume these conditions for $\gamma_i(x,z)$ and $\tau_{ik}(x,z)$ as well.
\end{enumerate}
\end{asmp}

\begin{prop}
\label{eumckean-1}
Suppose that the initial capitalization and trading volume $(\mathbf{x}, \mathbf{z})$ are independent of the Brownian motion $B (\cdot)$ on $(\Omega, \mathcal F, \mathbb F, \mathbb P)$, and they satisfy $\E[\|(\mathbf{x}, \mathbf{z})\|^2] \leq \infty$. Then, under Assumption~\ref{xv}, the McKean-Vlasov system \eqref{eq: x}-\eqref{eq: z_c} admits a unique strong solution.
\end{prop}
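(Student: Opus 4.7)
The plan is to recognize that despite the name \emph{McKean-Vlasov system}, the coupled pair \eqref{eq: x}-\eqref{eq: z_c} is actually a standard Itô SDE on $\mathbb{R}^{2n}$ once Assumption~\ref{xv}\ref{thomo} is invoked. Indeed, \ref{thomo} postulates that the drift and diffusion coefficients of the conditional-expectation process $\mathcal{Z}$ are time-homogeneous functions $\gamma(x,z)$, $\tau(x,z)$ of the current state $(\mathcal{X}(t),\mathcal{Z}(t))$ only, so the distributional dependence has already been absorbed into the enlarged state. Writing $Y(t) := (\mathcal{X}(t), \mathcal{Z}(t))^{\prime}$ and
\[
\mu(y) := \begin{pmatrix} b(x,z) \\ \gamma(x,z) \end{pmatrix}, \qquad \Sigma(y) := \begin{pmatrix} s(x,z) \\ \tau(x,z) \end{pmatrix}, \qquad y = (x,z) \in \mathbb{R}^{2n},
\]
the system takes the canonical form $dY(t) = \mu(Y(t))\,dt + \Sigma(Y(t))\,dB(t)$ with $Y(0) = (\mathbf{x}_0,\mathbf{z}_0)$.

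From here the approach is the classical Itô-Picard argument. Assumption~\ref{xv}\ref{asmp: weaksol} gives Lipschitz continuity and linear growth of each of $b, s, \gamma, \tau$ on $\mathbb{R}^n_+ \times \mathbb{R}^n_+$ with constants $C_L$ and $C_G$; these transfer, up to a dimensional factor, to $\mu$ and $\Sigma$. Together with $\E \|Y(0)\|^2 < \infty$ and the independence of $Y(0)$ from $B$, this is precisely the input of the standard strong existence/uniqueness theorem (e.g.\ Karatzas-Shreve, Theorem 5.2.9). Concretely, I would build Picard iterates
\[
Y^{(k+1)}(t) = Y(0) + \int_0^t \mu(Y^{(k)}(s))\,ds + \int_0^t \Sigma(Y^{(k)}(s))\,dB(s),
\]
use Burkholder-Davis-Gundy together with the Lipschitz bound to get $\E\sup_{0\le s\le t}\|Y^{(k+1)}(s)-Y^{(k)}(s)\|^2 \le C\int_0^t \E\sup_{0\le u\le s}\|Y^{(k)}(u)-Y^{(k-1)}(u)\|^2\,ds$, and iterate to show $\{Y^{(k)}\}$ is Cauchy in $L^2(\Omega; C([0,T]; \mathbb{R}^{2n}))$. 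The limit is a strong solution adapted to the $\mathbb{P}$-augmentation of $\mathbb{F}$. Pathwise uniqueness follows from the same Gronwall estimate applied to the difference of two putative solutions, and linear growth plus Gronwall yields the a priori moment bound $\E\sup_{0\le t\le T}\|Y(t)\|^2 < \infty$, which confirms the square-integrability of the Itô integrals used in the iteration.

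There is no genuine obstacle here, but two bookkeeping points deserve mention. First, the Lipschitz condition \eqref{bslip} is stated only on the cone $\mathbb{R}^n_+ \times \mathbb{R}^n_+$, while the Picard iteration takes values in $\mathbb{R}^{2n}$; this is handled by the standard Lipschitz extension to all of $\mathbb{R}^{2n}$ (e.g.\ via coordinatewise truncation), which does not affect the unique solution once one verifies that $Y$ stays in the cone --- the latter can be inherited from the geometric form of \eqref{eq: x0} by passing to log-coordinates on $\mathcal{X}$, or from a comparison/reflection argument for $\mathcal{Z}$. Second, the self-consistency $\mathcal{Z}_i(t) = \E[V(t)\pi_i(t)\,|\,\mathcal{F}^B_t]$ from Definition~\ref{def:eqz} is not an additional equation to impose: \ref{thomo} is precisely the working hypothesis that this conditional expectation admits the martingale-representation form \eqref{eq: z_c} with $(\gamma,\tau)$ depending on $(\mathcal{X},\mathcal{Z})$ alone, so uniqueness for the SDE in $Y$ automatically gives uniqueness of the interaction process in its original definition.
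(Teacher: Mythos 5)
Your proof is correct, but it takes a genuinely different route from the paper's. The paper gives no standalone argument for Proposition~\ref{eumckean-1}: it defers to the more general Theorem~\ref{eumckean} in the appendix, which treats the full conditional McKean--Vlasov system \eqref{mcvsde}--\eqref{mcvsde2} whose coefficients depend on the conditional joint law $\nu_t$ through the Wasserstein metric (Assumption~\ref{asmp: lipfn}); there the dynamics are rewritten as one enlarged SDE, a linear-growth bound is verified, and well-posedness is obtained by invoking the McKean--Vlasov results of Mishura and Veretennikov, which yields a unique \emph{weak} solution, with Proposition~\ref{eumckean-1} read off as the special case in which the measure dependence has been absorbed into the state $\mathcal Z$. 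You exploit exactly that absorption from the outset: under Assumption~\ref{xv}\ref{thomo} and \ref{asmp: weaksol} the pair $(\mathcal X,\mathcal Z)$ solves a plain $2n$-dimensional It\^o SDE with Lipschitz, linearly growing coefficients and square-integrable initial data independent of $B$, so classical Picard iteration with BDG and Gronwall gives strong existence and pathwise uniqueness directly --- which in fact matches the proposition's claim of a unique \emph{strong} solution more closely than the weak-solution statement the paper leans on. What the paper's route buys is generality (measure-dependent coefficients, needed later for Theorem~\ref{eumckean} and the propagation-of-chaos analysis); what yours buys is a self-contained, elementary proof of precisely the assertion being made. Your two bookkeeping caveats are well placed: the Lipschitz condition \eqref{bslip} is stated only on the cone $\mathbb{R}^n_+\times\mathbb{R}^n_+$ and the paper is silent on cone invariance (positivity of $\mathcal X$ does follow from the geometric form of \eqref{eq: x}, while positivity of $\mathcal Z$ is really inherited from its interpretation as $\E[V(t)\pi_i(t)\,|\,\mathcal F^B_t]$ rather than from the SDE itself), so the extension-plus-invariance step you flag is the right level of care, and your observation that the self-consistency condition of Definition~\ref{def:eqz} is a modelling hypothesis rather than an extra equation agrees with the paper's own decoupling discussion preceding the proposition.
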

We show a more general result in Theorem~\ref{eumckean}.

\section{Relative arbitrage opportunities for competitive investors}
\label{sec: ra}

{
The performance of a portfolio is typically measured with respect to some specific benchmark. Given two investment strategies $\pi(\cdot)$ and $\rho(\cdot)$, with the same initial capital $V^{\pi}(0) = V^{\rho}(0) =1$, we shall say that $\pi(\cdot)$ represents an arbitrage opportunity relative to $\rho(\cdot)$ over the time horizon $[0,T]$, with a given $T>0$, if
$$
\mathbb{P} \left( V^{\pi}(T) \geq V^{\rho}(T) \right) = 1 \quad \text{and} \quad \mathbb{P} \left(  V^{\pi}(T) > V^{\rho}(T) \right) >0. 
$$
Previous works in relative arbitrage have been focused on using market capitalization as a benchmark, that is, $\rho(\cdot) := \mathbf{m}(\cdot)$.  
However, when interactions among investors are considered, the market portfolio is not necessarily the only criterion of investors' performances. 
For example, asset managers care about not only absolute performance compared to the market index but also relative performance with respect to all collegiate managers; they try to exploit strategies that achieve an arbitrage relative to market and peer investors.
}

\subsection{Relative arbitrage benchmark: competing with the market and the continuum of players} 
\label{sec: benchmark}

A player competes with the market and the entire group with respect to the benchmark
\begin{equation}
\label{mathcalbench}
    \mathcal{V}(T) := \delta \cdot X(T) + (1-\delta) \cdot \E[V_T | \mathcal{F}_T^B],
\end{equation}
is the weighted sum of total capitalization and the conditional expectation of wealth given $\mathcal F_\cdot^B$. The formal definition of the conditional measure will be given in Definition \ref{def: mfe}. 
To better understand the market and the interactions, we first prove the following result, which shows that the benchmark $\mathcal{V}$ is a valid wealth process.
\begin{prop}
\label{pistarr}
Benchmark $\mathcal{V}(t)$ in \eqref{mathcalbench} can be generated from a strategy $\Pi(\cdot) := (\Pi_1(\cdot), \ldots, \Pi_n(\cdot)) \in \mathbb{A}$ that satisfies 
\begin{equation}
\label{bigpi}
\Pi_i(t, \mathcal X(t), \mathcal{Z}(t)) = \frac{1}{\mathcal{V}(t)} \left( \delta X_i(t) + (1-\delta) \mathcal{Z}_i(t)\right). 
\end{equation}
\end{prop}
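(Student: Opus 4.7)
The plan is to verify directly that the candidate $\Pi$ is a portfolio in $\mathbb{A}$ and that applying the wealth equation \eqref{wealth} to $\Pi$ with initial capital $\mathcal{V}(0) = \delta x_0 + (1-\delta) z_0$ reproduces the process $\mathcal{V}(t)$. In other words, I would check (a) that $\Pi$ lies in the simplex $\Delta_n$ and is admissible, and (b) that $d\mathcal{V}(t)/\mathcal{V}(t) = \sum_i \Pi_i(t)\, dX_i(t)/X_i(t)$.

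For (a), the simplex condition is immediate: using $Z(t) = \sum_{i=1}^n \mathcal{Z}_i(t)$ from Definition~\ref{def:eqz},
\[
\sum_{i=1}^n \Pi_i(t) = \frac{\delta X(t) + (1-\delta) Z(t)}{\mathcal{V}(t)} = 1.
\]
Progressive measurability of $\Pi$ follows because $X_i$ and $\mathcal{Z}_i$ are $\mathbb{F}^B$-adapted (hence $\mathbb{F}^{MF}$-adapted), and $\mathcal{V}(t) > 0$ almost surely (since $X_i(t) > 0$ and $Z(t) \ge 0$). The integrability condition \eqref{admpi} then reduces to checking $\int_0^T (|\Pi' \beta| + \Pi' \alpha \Pi)\,dt < \infty$, which follows from the boundedness $|\Pi_i| \le 1$ together with \eqref{xcond}.

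For (b), I would differentiate $\mathcal{V}(t) = \delta X(t) + (1-\delta) Z(t)$ term by term, using \eqref{eq: x} for $dX_i$ and \eqref{eq: z_c} together with the explicit form \eqref{eq: functional_i} of the coefficients, namely $\gamma_i = \mathcal{Z}_i \beta_i$ and $\tau_{ik} = \mathcal{Z}_i \sigma_{ik}$:
\begin{align}
d\mathcal{V}(t) &= \delta \sum_{i=1}^n X_i(t)\Bigl(\beta_i\, dt + \sum_{k=1}^n \sigma_{ik}\, dB_k(t)\Bigr) + (1-\delta) \sum_{i=1}^n \mathcal{Z}_i(t)\Bigl(\beta_i\, dt + \sum_{k=1}^n \sigma_{ik}\, dB_k(t)\Bigr) \\
&= \sum_{i=1}^n \bigl[\delta X_i(t) + (1-\delta)\mathcal{Z}_i(t)\bigr]\Bigl(\beta_i\, dt + \sum_{k=1}^n \sigma_{ik}\, dB_k(t)\Bigr) \\
&= \mathcal{V}(t) \sum_{i=1}^n \Pi_i(t) \frac{dX_i(t)}{X_i(t)},
\end{align}
which is exactly \eqref{wealth} with strategy $\Pi$ and initial value $\mathcal{V}(0)$. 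Uniqueness of the solution to the linear SDE then yields that $\Pi$ generates $\mathcal{V}$.

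The only subtle point, and what I would treat most carefully, is the justification that $\gamma_i$ and $\tau_{ik}$ really take the product form \eqref{eq: functional_i} in the coupled system: this relies on the derivation preceding Definition~\ref{def:eqz} (pulling $\E[V\pi_i|\mathcal{F}^B_t]$ outside the market coefficients, which in turn uses that $\beta_i, \sigma_{ik}$ are functions of $(\mathcal{X}(t),\mathcal{Z}(t))$ and hence $\mathcal{F}^B_t$-measurable at time $t$ under Assumption~\ref{xv}\ref{thomo}). Once those identities are in hand, the remainder is bookkeeping and the admissibility verification is routine.
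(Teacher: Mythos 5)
Your proposal is correct and follows essentially the same route as the paper: differentiate $\mathcal{V}(t)=\delta X(t)+(1-\delta)Z(t)$ using the product form $\gamma_i=\mathcal{Z}_i\beta_i$, $\tau_{ik}=\mathcal{Z}_i\sigma_{ik}$ so that $dZ(t)=\sum_i \mathcal{Z}_i(t)\,dX_i(t)/X_i(t)$, identify the result with the wealth equation under $\Pi$, and check $\sum_i\Pi_i=1$ together with the admissibility condition \eqref{admpi}. Your write-up is in fact slightly more explicit than the paper's (on measurability, integrability, and where the coefficient identities \eqref{eq: functional_i} enter), but it is the same argument.
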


We assume that each investor $\ell$ measures the logarithmic ratio of their own wealth $V^\ell(T)$ at time $T$ to the relative arbitrage benchmark \eqref{mathcalbench}, and searches for a strategy with which the logarithmic ratio is above a personal level of preference almost surely. {We denote the investment preferences of investors by $c_{\ell}, \ell \in \mathbb N$, i.i.d samples drawn from the distribution $Law(c)$ and assigned to investor $\ell$ at time 0. $Law(c)$ is independent of common noise $B$}. An investor $\ell$ tries to achieve the relative amount $e^{c_{\ell}}\mathcal{V}(T)$ to the benchmark $\mathcal V(T)$  based on their preferences $c_\ell$, that is,   
\begin{equation} 
\label{eq: arbitrage3.2}
\log \frac{V^{\ell}(T)}{\mathcal{V}(T)} \ge c_{\ell}, \quad \text{a.s.} \quad 
\text{ or equivalently, } \quad 
V^{\ell}(T) \geq e^{c_{\ell}} \mathcal{V}(T), \quad \text{a.s.}.
\end{equation}

\begin{remark}
{If $c_\ell$ above is taken to be $0$ for every $\ell$, then it means that the investor tries to match or exceed the exact benchmark quantity. In the regime with finite number of investors \cite{ichiba2020relative}, it has been shown that a necessary condition for investors to achieve relative arbitrage is $\frac{1-\delta}{N} \sum_{\ell = 1}^N e^{c_\ell} < 1$ and it is possible for $c_\ell$ to take a small positive number. To see this, consider a special case that the preferences held by investors are the same, then $c_\ell =c < \log \left(\frac{1}{1-\delta}\right)$ is a valid level of preference where $\delta \in (0,1]$. }

{In the setup of this paper, the values $c_\ell$ are modeled as random samples from a common distribution. The market system is coupled among investors through the strategies, trading volume, and objective \eqref{utobj1}, which makes it difficult to optimize the objective analytically because of the interactions, as is the case in \cite{ichiba2020relative}. Thus, it is advantageous to consider each player $\ell \in \mathbb{N}$ is conditionally i.i.d given common noise $B$ in the infinite-player case. In this case, any player with an index $\ell$ is indistinguishable from other players, and the system is decoupled as the interaction terms are now constructed by a population distribution rather then specific pairwise interactions for each $\ell$. In the rest of this section, we consider this simplification and leave out the superscript $\ell$ in all related quantities. The justification of such representative player construction will be deferred to Section~\ref{lims}.}
\end{remark}

\subsection{Optimization of relative arbitrage opportunities}
\label{sec: u}

{
A natural question in pursuing relative arbitrage opportunities is how to formulate and solve the optimization problem that maximizes the (expected) return. In an interacting system, this is a challenging task, as the decisions feed back into both the market dynamics and the objective functional. In this section, we specify the optimization problem in pursuing arbitrage opportunities relative to the benchmark we defined in Section~\ref{sec: benchmark}. {A subject in a similar favor as ours is quantile hedging \cite{follmer1999quantile}, where the investors might not be willing to perform a `perfect' hedging and the question becomes what the maximal probability of `success' the investor can achieve with a given smaller amount of initial capital.}
}

\begin{definition}[Optimal arbitrage]
\label{def: ut}
For $\ell \in \mathbb N$, and initial time $t \in [0,T)$, we define the optimal arbitrage $u^{\ell}: (0,\infty) \times (0,\infty)^n \times (0,\infty)^n \rightarrow (0,\infty)$ as the optimal fraction of initial wealth of the benchmark portfolio $\widetilde{\mathcal{V}}(t)$. With the initial wealth $u^{\ell}(T-t, \mathbf{x}, \mathbf{z}) \widetilde{\mathcal{V}}(t)$, we match or exceed the benchmark portfolio at the terminal time $T$. {That is, given the initial condition $(\mathcal{X}(t),\mathcal{Z}(t)) := ( \mathbf{x}, \mathbf{z}) \in (0,\infty)^n \times (0,\infty)^n$, the initial wealth of the other investors $v^{-\ell}:= (v^{k}(\cdot))_{k \in \mathbb{N}, k \neq \ell}$, and the admissible portfolios $\pi^{-\ell}(\cdot):= (\pi^{k}(\cdot))_{k \in \mathbb{N}, k \neq \ell}$.}
\begin{equation}
\label{utobj1}
\begin{aligned}
    u^{\ell}(T-t) = \inf \bigg \{ \omega^{\ell} \in (0, \infty) \,  \Big \vert \, \exists
\widetilde{\pi}^{\ell}(\cdot) \in \mathbb{A} & \text{ such that } \\
& \,v^{\ell} = \omega^{\ell} \widetilde{\mathcal{V}}(t) , \, \,  \widetilde{V}^{{\ell}}(T) \geq e^{c_{\ell}} \cdot \widetilde{\mathcal{V}}(T) \bigg \} 
\end{aligned}
\end{equation}
for $ 0 \le t \le T$ and $\ell \in \mathbb N$. Here, $\widetilde{\mathcal V}$ and $\widetilde{V}^\ell$ are the benchmark and portfolio wealth corresponding to $\tilde{\pi}$'s, respectively. 
At every time $t$, each investor 
optimizes $\, \widetilde{\pi}^\ell(\cdot)\, $ from $t$ to $T$, in order to get the optimal quantity as defined in \eqref{utobj1}. In other words, here $\widetilde{V}^{v^\ell , \pi^\ell} (T) \,$ is generated by the admissible portfolio $\, \{ \pi^\ell (s) , s \ge t\} \, $ starting from time $t \ge 0$. We consider $\, \widetilde{\pi}^\ell(s)\, $, $\, t \le s \le T$ such that the corresponding portfolio value $\, \widetilde{V}^\ell (s)\, $, $\, t \le s \le T\, $ satisfies $\, \widetilde{V}^\ell (T) \ge e^{c_\ell} \cdot \widetilde{\mathcal{V}}(T)$, where $\widetilde{V}^{\ell} (t) = v^\ell = \omega^\ell \widetilde{\mathcal{V}}(t) \, ,$
\[
\frac{d \widetilde{V}^{\ell}(s)}{ \widetilde{V}^{\ell} (s)} = \sum_{i=1}^n \widetilde{\pi}_i^\ell (s) \cdot \frac{d X_i (s)}{\, X_i (s) \,}\, ; \quad t \le s \le T, \, \, \ell \in \mathbb N.
\]
\end{definition}
{We use the notation $\widetilde{\mathcal{V}}(\cdot)$ instead of the initially defined ${\mathcal{V}}(\cdot)$ to emphasize that each subproblem starting from $t$ needs to be solved separately for the time horizon $[t,T]$. At the initial time $t$, $\mathbf{z}$ is not known a priori, and Definition~\ref{def: ut} results in a fixed point problem on the path space of $u$, taking values from $C([0,T], \mathbb R_+)$, then $\mathbf{z}$ is solved at the initial time. This will be clarified and derived in Section~\ref{sec: mfg}.}

Next, we present the probabilistic characterization of the optimal arbitrage. 
\begin{prop}
\label{f1}
Under Assumption~\ref{xv}, 
{assume $\pi(s)$ to be the optimal strategy over $s \in [t,T]$ in Definition~\ref{def: ut} that achieves $u(T-t, \mathbf{x}, \mathbf{z})\in C^{1,3,3}$}. 
With given portfolio $\pi^{-\ell}(\cdot)$, for $\ell = 1, \ldots, N$, and the initial values $(\mathbf{x}, \mathbf{z})$, 
$u(T)$ in \eqref{utobj1} can be derived as $e^{c} \mathcal{V}(T)$'s discounted expected values 
\begin{equation}
\label{ute}
        u(T, \mathbf{x}, \mathbf{z}) = e^{c} \mathbb{E} \big[ \mathcal{V}(T) L(T) | \mathcal{F}_0^{MF}\big]\, /\, \mathcal{V}(0). 
\end{equation}
Moreover, the Markovian property of $u(\cdot)$ gives
\begin{equation}
\label{ggg}
   u(T-t, \mathcal{X}(t), \mathcal{Z}(t)) = e^{c} \frac{\E[\mathcal{V}(T)L(T) | \mathcal{F}^{MF}_t] }{\mathcal{V}(t)L(t)}. 
\end{equation}
\end{prop}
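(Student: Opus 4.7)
The plan is to prove both assertions by a standard complete-market super-replication argument, exploiting the existence of the deflator $L(\cdot)$ and the martingale representation property in the Brownian filtration (after conditioning on the initial type data), then upgrading to the Markovian/dynamic version by shifting the time origin.

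First I would establish the lower bound. Fix $\ell$ and any admissible $\widetilde{\pi}^{\ell}$ with initial capital $v^{\ell} = \omega^{\ell}\,\widetilde{\mathcal{V}}(0)$ satisfying the super-replication constraint $\widetilde{V}^{\ell}(T) \ge e^{c_{\ell}}\widetilde{\mathcal{V}}(T)$ a.s. From \eqref{vlhatt} the discounted wealth $\widehat{V}^{\ell}(\cdot)=L(\cdot)\widetilde{V}^{\ell}(\cdot)$ is a nonnegative local martingale, hence a supermartingale. Conditioning on $\mathcal{F}_0^{MF}$ (which carries the initial data $(\mathbf{x},\mathbf{z},c_{\ell})$ and on which $\mathcal{V}(0)=\delta X(0)+(1-\delta)\mathbf{z}_{\text{tot}}$ is measurable), the supermartingale inequality yields
\[
\omega^{\ell}\mathcal{V}(0) \;=\; \widetilde V^\ell(0)L(0) \;\ge\; \E\!\left[L(T)\widetilde V^\ell(T)\,\big|\,\mathcal{F}_0^{MF}\right] \;\ge\; e^{c_{\ell}}\,\E\!\left[L(T)\mathcal{V}(T)\,\big|\,\mathcal{F}_0^{MF}\right].
\]
Dividing by $\mathcal{V}(0)>0$ and taking infimum over admissible super-replicating $\widetilde{\pi}^{\ell}$ gives $u^{\ell}(T,\mathbf{x},\mathbf{z}) \ge e^{c_{\ell}} U(T)$.

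For the matching upper bound I would invoke the martingale representation theorem. Under Assumption~\ref{xv} together with the existence of $\theta(\cdot)$ in \eqref{assum}, the market is complete, so the $\mathbb{F}$-martingale $M(t):=\E[L(T)e^{c_\ell}\mathcal{V}(T)\mid \mathcal{F}_t^{MF}]$ admits an integrand $\psi(\cdot)$ with $M(t)=M(0)+\int_0^t \psi(s)^\prime dB(s)$; rewriting $\psi$ in terms of the portfolio weights via the invertibility of $\sigma(\cdot)$ produces an admissible strategy $\widetilde{\pi}^{\ell,*}\in\mathbb{A}$ whose wealth $\widetilde V^{\ell,*}$, started from $\widetilde V^{\ell,*}(0)=M(0)=e^{c_\ell}\E[L(T)\mathcal{V}(T)\mid\mathcal{F}_0^{MF}]$, hits exactly $e^{c_\ell}\mathcal{V}(T)$ at time $T$. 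This candidate initial wealth equals $e^{c_\ell} U(T)\cdot\mathcal{V}(0)$, and the corresponding $\omega^\ell = e^{c_\ell}U(T)$ is feasible in \eqref{utobj1}, giving $u^{\ell}(T,\mathbf{x},\mathbf{z}) \le e^{c_\ell}U(T)$. Combined with the previous step this yields \eqref{ute}.

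For the Markovian version \eqref{ggg}, I would repeat the same super-replication argument applied to the shifted problem on $[t,T]$: since $(\mathcal{X},\mathcal{Z})$ is Markovian by Assumption~\ref{xv}\ref{thomo}, the optimizer depends only on the state $(\mathcal{X}(t),\mathcal{Z}(t))$, and the same deflator argument with $L(T)/L(t)$ in place of $L(T)$ produces
\[
u^{\ell}(T-t,\mathcal{X}(t),\mathcal{Z}(t)) \;=\; \frac{e^{c_{\ell}}\,\E\!\left[\mathcal{V}(T)\,L(T)/L(t)\,\big|\,\mathcal{F}_t^{MF}\right]}{\mathcal{V}(t)}.
\]
Pulling the $\mathcal{F}_t^{MF}$-measurable factor $L(t)$ outside the conditional expectation yields \eqref{ggg}.

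The main obstacle I anticipate is the existence side (the upper bound): one needs the martingale representation theorem to apply on the filtration $\mathbb{F}^{MF}=\mathbb{F}^{B}\vee\sigma(v_0,c)$ and to produce an \emph{admissible} portfolio that realizes the integrand, rather than just an abstract integrand. This requires that the initial $\sigma$-field $\sigma(v_0,c)$ be treated as deterministic under conditioning, so that martingale representation reduces to the classical Brownian case, and that the resulting strategy satisfies the integrability condition \eqref{admpi}; the latter follows from square-integrability of the benchmark claim $e^{c_\ell}\mathcal{V}(T)L(T)$ under the assumptions on $\beta,\sigma,\theta$.
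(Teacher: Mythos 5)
Your proposal is correct and follows essentially the same route the paper intends (its proof is omitted as standard, adapted from \cite{ichiba2020relative, fernholz2010optimal} via the martingale representation theorem mentioned in the introduction): the supermartingale property of the deflated wealth gives the lower bound, martingale representation in the complete market — with the initial $\sigma$-field $\sigma(v_0,c)$ handled by independence from $B$ — yields an exact replicating strategy for the upper bound, and a time shift with the Markov property of $(\mathcal{X},\mathcal{Z})$ gives \eqref{ggg}. The one point worth stating explicitly is that, with $\pi^{-\ell}$ fixed, the benchmark $\mathcal{V}$ is unaffected by the single player's own deviation, which is what licenses treating $e^{c_{\ell}}\mathcal{V}(T)$ as a fixed contingent claim in both bounds.
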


The proof of this is based on the construction of strategies that generate the right hand side of \eqref{ggg} through the martingale representation theorem. { The choice of $u(t, \mathbf{x}, \mathbf{z}) \in C^{1,3,3}((0, \infty) \times (0, \infty)^n \times (0, \infty)^n)$ is to enforce sufficient regularity so that the optimal strategy $\pi(\cdot)$ is in the admissible set $\mathbb{A}$. We refer the readers to a similar treatment of this regularity and the finite-player counterpart of the proof in \cite[Section 3]{ichiba2020relative}. }
From this probabilistic formulation, we can further characterize it as the nonnegative minimal solution of a Cauchy problem, derived in section ~\ref{5.1}.
{
Given the existence of relative arbitrage in the sense of \eqref{eq: arbitrage3.2},  the process $L(\cdot)$ is a strict local martingale, i.e., $\mathbb{E}[L(T)]<1$. This is explained in \cite{ichiba2020relative} for a finite-investor interacting market system and it can be shown in the same vein here. Readers can refer to \cite[Remark 1]{ichiba2020relative} and \cite[Proposition 6.1]{karatzas2009stochastic} for details. The discounted processes $\widehat{V}(\cdot) := V(\cdot) L(\cdot)$, and $\widehat{X}(\cdot) := X(\cdot) L(\cdot)$ satisfy the following equations,  
\begin{equation}
\label{vlhatt}
\begin{aligned}
   d \widehat{V}(t) = d V(t) L(t) =  \widehat{V}(t)\big(\pi^{\prime}(t) \sigma(t) -  \theta'(t) \big)dB(t); \quad  \widehat{V}(0) = \widehat{v}, \, \, \ell = 1, \ldots , N\, , 
\end{aligned}
\end{equation}
\begin{equation}
\label{xhat}
d \widehat{X}(t) = \widehat{X}(t) \sum_{k=1}^n (\sum_{i=1}^n \mathbf{m}_i(t) \sigma_{ik}(t) - \theta_k(t)) dB_k(t); \quad \widehat{X}(0) = x.
\end{equation}
}

\begin{remark}[$\mathcal{F}^{B}$ in the market dynamics]
Note that $\mathbf{z}$ needs to be determined with the knowledge of $u(T, \mathbf{x}, \mathbf{z})$. Thus, we can construct a fixed point problem on the path space of $u$, taking values from $C([0,T], \mathbb R_+)$, then $\mathbf{z}$ is solved at the initial time. We discuss this in detail in the next section. With $\mathbf{z}$ fixed, the market $(\mathcal X(t), \mathcal{Z}(t))$ only has access to the information of the common noise, and is $\mathcal{F}_t^B$-measurable. When the trajectory $(B_t)_{t \in [0,T]}$ is given, the randomness of the conditional expectation in $\mathcal{Z}$
comes from the preference parameter $c$ and the initial trading volume $\mathbf{z}$. 
\end{remark}
\begin{remark}[$\mathcal{F}^{MF}$ in the optimization]
The information available to each player $\ell$ at time $t$ is the common noise and their private state of the type vector $(v_0^\ell, c_\ell)$. With a given initial condition $(\mathbf{x}, \mathbf{z}) := \left(\{X_i(0)\}_{i=1}^n, \{\mathcal Z_i(0)\}_{i=1}^n\right)$, we search for the optimal arbitrage and the corresponding $\mathcal{F}^{MF}$-measurable strategy $\pi_t$, for $t \in [0,T]$. Thus, the admissible strategies are progressively measurable with respect to $\mathcal{F}_t^{MF}$. This means that the randomness of wealth processes is from $\mathcal{F}^B$, $\pi(\cdot)$ and the path $(\mathbf{x}, \mathbf{z})$ over the time horizon. $\pi(\cdot)$ depends on $\{u(T-t, \mathbf{x}, \mathbf{z})\}_{t \in [0,T]}$ and $(\mathcal{X},\mathcal{Z})$ and, therefore, depends on $c$.  
\end{remark}

{
\subsection{PDE characterization of the best relative arbitrage}
\label{5.1}
}

{
This section elaborates the derivation of the Cauchy problem, which is derived in the same vein as the finite-player counterpart in \cite{ichiba2020relative}.
Note that the coefficients $(\gamma, \tau)$ are not given a-priori and are solved through a fixed point problem, see \cite[Appendix D.2]{ichiba2020relative}. We consider $u(t, \mathbf{x}, \mathbf{y}) \in C^{1,3,3}((0, \infty) \times (0, \infty)^n \times (0, \infty)^n)$ which guarantees that the optimal strategy $\pi$ is in the admissible set $\mathbb{A}$ in Definition \ref{portfoliopi}.} 

{
In the following, we use the notation $D_i$ and  $D_{ij}$  for  the  partial and second partial derivative with respect to the $i$ th or $i$ th and $j$ th variables  in $\mathbf x$, respectively; $D_p$ and  $D_{pq}$  for  the first and second partial derivatives in the variable $\mathbf z$.}

{
\begin{asmp}
\label{hasmp}
There exists a function $H: \mathbb{R}_+^n \times \mathbb{R}_+^n \rightarrow \mathbb{R}_+^n$ of class $C^2$, such that
\[
b( \mathbf{x}, \mathbf{z}) = 2a( \mathbf{x}, \mathbf{z}) D_x H( \mathbf{x},\mathbf{z}), \quad \gamma( \mathbf{x}, \mathbf{z}) = 2\psi( \mathbf{x}, \mathbf{z}) D_z H(\mathbf{x}, \mathbf{z}),
\]
i.e.,  $b
_i(\cdot) =\sum_{j=1}^n a_{ij}(\cdot) D_{j} H (\cdot)$,  $\gamma
_p(\cdot) =\sum_{q=1}^n \psi_{pq}(\cdot) D_{q} H(\cdot)$ in component wise for $i,p = 1, \ldots , n$. 
\end{asmp}
{The above assumption means that the vector field $a^{-1}(\cdot) b(\cdot)$ and $\psi^{-1}(\cdot) \gamma(\cdot)$ are conservative in $\mathbf{x}$ and $\mathbf{z}$, respectively. They have a symmetry relationship that $D_z\left( a^{-1}(\cdot) b(\cdot) \right) = D_x \left(\psi^{-1}(\cdot) \gamma(\cdot) \right)$. This assumption is originally posed in \cite{fernholz2010optimal} for the case without mean-field structure of the market. For a quick illustrative example, consider 
\[
b_i(\mathbf{x},\mathbf{z})= 1 + \frac{1}{2} x_i \sum_{i=1}^n z_i , \quad a_{ij}(\mathbf{x}, \mathbf{z}) = x_i \delta_{ij},
\]
\[
\gamma_i(\mathbf{x},\mathbf{z})= 1 + \frac{1}{2} z_i \sum_{i=1}^n x_i , \quad
\psi_{ij}(\mathbf{x}, \mathbf{z}) = z_i \delta_{ij},
\]
where $\delta_{ij} = 1$, when $i = j$; and $\delta_{ij} = 0$ otherwise, when $i \neq j$. Then,
\[
H(\mathbf{x}, \mathbf{z}) = \sum_{i=1}^n (\log x_i + \log z_i) + \frac{1}{2} \left(\sum_{i=1}^n x_i\right) \left(\sum_{i=1}^n z_i\right).
\]
In financial markets, the so-called \emph{leverage effect} suggests that the smaller stocks tend to have greater volatility than the larger stocks. Similarly, lower trading volumes in a stock are often associated with higher volatility in trading. Later in Example~\ref{ex: vsm}, we discuss a realistic volatility-stabilized market model that satisfies this effect, and provide a streamlined explanation of Assumption~\ref{hasmp} and the corresponding optimal arbitrage solution. 
}
Use the definition of $\theta(\cdot)$ and Assumption~\ref{hasmp},
 and apply It\^o's lemma on $H(\cdot)$, it follows that
\begin{equation}
\label{ltmfg}
    \begin{aligned}
    L(t) = \exp \bigg\{ - H(\mathcal{X}(t), \mathcal{Z}(t)) +  H(\mathbf{x}, \mathbf{z}) - \int_0^{t} ( k(\mathcal{X}(s), \mathcal{Z}(s)) + \Tilde{k}(\mathcal{X}(s), \mathcal{Z}(s)) ) ds \bigg\}
\end{aligned}
\end{equation}
for $t \ge 0 $, where
\[
k(\mathbf{x}, \mathbf{z}) := - \sum_{i=1}^n \sum_{j=1}^n \frac{a_{ij}( \mathbf{x}, \mathbf{z})}{2} [ D_{ij}^2 H( \mathbf{x},\mathbf{z}) + 3 D_i H ( \mathbf{x},\mathbf{z}) D_j H( \mathbf{x},\mathbf{z})],
\]
\[ 
\begin{aligned}
\Tilde{k}( \mathbf{x}, \mathbf{z}) :&= -\sum_{i=1}^n \sum_{j=1}^n \frac{\psi_{pq}( \mathbf{x}, \mathbf{z})}{2} [ D_{pq}^2 H(\mathbf{x},\mathbf{z}) + 3 D_p H(\mathbf{x},\mathbf{z}) D_q H(\mathbf{x},\mathbf{z})] \\
& \qquad {}+ \sum_{i=1}^n \sum_{p=1}^n (s \tau^\top)_{ip} D_i H ( \mathbf{x},\mathbf{z}) D_p H( \mathbf{x},\mathbf{z}) 
\end{aligned}
\]
for $(\mathbf{x},\mathbf{z}) \in (0,\infty)^n \times (0,\infty)^n$. 
Thus, we can rewrite $u(\cdot)$ in Proposition \ref{f1} and get
\begin{equation}
\label{gg}
   u(\tau, \mathbf{x}, \mathbf{z}) =e^{c}  \frac{G(\tau,\mathbf{x}, \mathbf{z})}{g(\mathbf{x}, \mathbf{z})},
\end{equation}
where 
\begin{equation*}
\begin{split}
g(\mathbf{x}, \mathbf{z}) &:= \bigg(\delta \sum_{i=1}^n x_i + (1-\delta) \sum_{i=1}^n z_i \bigg) e^{-H(\mathbf{x},\mathbf{z})}, 
\\
G(T,\mathbf{x}, \mathbf{z}) &:=  \E^{\mathbf{x}, \mathbf{z}} \left[ g(\mathcal{X}(T), \mathcal{Z}(T)) e^{- \int_0^{T} k(\mathcal{X}(t),\mathcal{Z}(t))+ \Tilde{k}(\mathcal{X}(t),\mathcal{Z}(t)) dt}\right].
\end{split}
\end{equation*}
\begin{asmp}
\label{hh}
Assume that $g(\cdot)$ is H\"older continuous, uniformly on compact subsets of $\mathbb{R}_+^n \times \mathbb{R}_+^n$. $G(\cdot)$ is continuous on $(0, \infty) \times (0, \infty)^n \times (0, \infty)^n$, of class $C^{1,3,3}((0, \infty) \times (0, \infty)^n \times (0, \infty))$.
The function $G(\cdot)$ yields the following dynamics by Feynman-Kac formula,
\begin{equation}
\begin{aligned}
    \frac{\partial G}{\partial \tau}(\tau, \mathbf{x}, \mathbf{z}) = & \, \mathcal{L} G(\tau,  \mathbf{x}, \mathbf{z}) - (k(\mathbf{x}, \mathbf{z})+ \Tilde{k}(\mathbf{x}, \mathbf{z}) ) G(\tau,  \mathbf{x}, \mathbf{z}), \\ 
G(0, \mathbf{x}, \mathbf{z}) = & \, g( \mathbf{x}, \mathbf{z}),\  
\end{aligned}
\end{equation}
for $(\tau,  \mathbf{x}, \mathbf{z}) \in \mathbb{R}_+ \times \mathbb{R}_+^n \times \mathbb{R}_+^n$. 
{The generator for the process $(\mathcal X(\cdot), \mathcal Z(\cdot)) $ can be written as
\[
\begin{aligned}
 \mathcal{L}f & := \sum_{i=1}^n \sum_{j=1}^n a_{ij}( \mathbf{x}, \mathbf{z}) \big[ \frac{1}{2}  D_{ij} f + 2 D_i f D_j H( \mathbf{x}, \mathbf{z}) \big] \\
 & + \sum_{p=1}^n \sum_{q=1}^n \psi_{pq}( \mathbf{x}, \mathbf{z}) \big[ \frac{1}{2}  D_{pq} f + 2 D_p f D_q H( \mathbf{x}, \mathbf{z}) \big]\\
 &+ \frac{1}{2} \sum_{i=1}^n \sum_{p=1}^n (s \tau')_{ip}( \mathbf{x}, \mathbf{z}) D_{ip} f + \frac{1}{2} \sum_{i=1}^n \sum_{p=1}^n (\tau s')_{pi}( \mathbf{x}, \mathbf{z}) D_{pi} f,
 \end{aligned}
\]
where $(\tau s')_{pi}( \mathbf{x}, \mathbf{z}) = (s \tau')_{ip}( \mathbf{x}, \mathbf{z}) = \sum_{k=1}^K s_{ik}(\mathbf{x}, \mathbf{z}) \tau_{pk}( \mathbf{x}, \mathbf{z})$.}
\end{asmp}

Under Assumption~\ref{hh}, $u(\tau, \mathbf{x}, \mathbf{z}) \in C^{1,3,3}((0, \infty) \times (0, \infty)^n \times (0, \infty))$ is bounded on $K \times (0, \infty)^n \times (0, \infty)^n$ for each compact $K \subset (0, \infty)$. Plugging \eqref{ggg} into the set of above equations and using the Markovian property of $g(\cdot)$ gives
\[
\begin{aligned}
 \frac{\partial u(t,  \mathbf{x}, \mathbf{z})}{\partial t} g( \mathbf{x}, \mathbf{z}) &= \mathcal{L}(u(t,  \mathbf{x}, \mathbf{z}) g( \mathbf{x}, \mathbf{z})) - \left( k( \mathbf{x}, \mathbf{z}) + \Tilde{k}( \mathbf{x}, \mathbf{z})\right) u(t,  \mathbf{x}, \mathbf{z}) g( \mathbf{x}, \mathbf{z}).
\end{aligned}
\] 
For simplicity, we write $u(t)$ in place of $u(t, \mathbf{x},\mathbf{z})$. 
}

{
We can show that \eqref{ggg} solves the following Cauchy problem in the same vein as the finite player case in \cite{ichiba2020relative}. However, note that \eqref{ggg} solves a single Cauchy problem as opposed to the coupled $N$-dimensional PDEs system in $N$-player game from \cite{ichiba2020relative}. }

{
\begin{prop} 
Under Assumption~\ref{xv}-\ref{hh}, the function $u: [0,\infty) \times (0,\infty)^n \times (0,\infty) \rightarrow (0, 1]$, 
is the smallest non-negative continuous function, of class $C^{1,3,3}$ on $(0,\infty) \times (0,\infty)^n \times (0, \infty)$, that satisfies 
\begin{equation}
\label{inequ}
\begin{aligned}
\frac{\partial u(\tau, \mathbf{x}, \mathbf{z})}{\partial \tau} \geq \mathcal{A} u(\tau, \mathbf{x}, \mathbf{z}), \quad u(0, \mathbf{x}, \mathbf{z}) = e^c,
 \end{aligned}
\end{equation}
where 
\begin{equation}
\label{eqsol}
\begin{aligned}
\mathcal{A} u(\tau, \mathbf{x}, \mathbf{z}) =& \frac{1}{2} \sum_{i=1}^n \sum_{j=1}^n a_{ij} ( \mathbf{x}, \mathbf{z}) \left(D_{ij}^2 u(\tau,  \mathbf{x}, \mathbf{z}) + \frac{ 2 \delta D_i u(\tau,  \mathbf{x}, \mathbf{z})}{ \delta \mathbf{x} \cdot \mathbf{1} + (1-\delta) \mathbf{z} \cdot\mathbf{1}}\right) \\
& + \frac{1}{2} \sum_{p=1}^n \sum_{q=1}^n \psi_{pq} ( \mathbf{x}, \mathbf{z}) \left(D_{pq}^2 u(\tau, \mathbf{x}, \mathbf{z})+ \frac{ 2 (1-\delta) D_p u(\tau, \mathbf{x}, \mathbf{z})}{ \delta \mathbf{x} \cdot \mathbf{1} + (1-\delta) \mathbf{z} \cdot\mathbf{1}}\right)\\
& + \sum_{i=1}^n \sum_{p=1}^n (s \tau^\top)_{ip}( \mathbf{x}, \mathbf{z}) D_{ip}^2 u(\tau, \mathbf{x}, \mathbf{z})\\
& + \sum_{i=1}^n \sum_{p=1}^n (s \tau^\top)_{ip}( \mathbf{x}, \mathbf{z}) \frac{ \delta D_p u(\tau, \mathbf{x}, \mathbf{z}) + (1-\delta) D_i u(\tau, \mathbf{x}, \mathbf{z})}{ \delta \mathbf{x} \cdot \mathbf{1} + (1-\delta) \mathbf{z} \cdot\mathbf{1}}.
\end{aligned}
\end{equation}
\end{prop} 
We emphasize here that \eqref{eqsol} is determined entirely from the volatility structure of $\mathcal{X}(\cdot)$ and $\mathcal{Z}(\cdot)$. }
{
The existence of relative arbitrage opportunities amounts to the smallest non-negative solution $u(\tau, \mathbf{x}, \mathbf{z})$ of \eqref{inequ} to be less than $e^c$ for some $\tau, \mathbf{x}, \mathbf{z}$. Indeed, one can see that $u(\tau, \mathbf{x}, \mathbf{z}) \equiv 1$ is a trivial solution of the equality of \eqref{inequ}.
}

    \bigskip

    {
The evolution of market dynamics and the action of the investors are coupled, and we simplify the discussion so far by considering a given (optimal) strategy $\pi(t) \in \mathbb A$, $t \in [0,T]$. Thus, the problem is decoupled as we can determine the coefficients of the market capitalization and trading volumes \eqref{eq: z_c} under this optimal strategy. This gives the motivation of the next section about mean-field equilibrium, where we analyze how the optimization is achieved by the investors and how this in turn influence the interactions in the market.}

\section{Mean Field Games of relative arbitrage optimization}
\label{sec: mfg}

{The group of competitive investors seeks to optimize their wealth in the coupled market system of capitalization and trading volume of each stock. The market movement is affected by the group of investors and vice versa. This motivates us to formulate and solve the problem as a stochastic game. That is, we look for the fixed point of the best response function, the Markovian control $\pi^{\ell \star}$ of player $i$ that minimizes $J(\pi^{\ell}, \pi^{-\ell \star})$ over Markovian controls $\pi^{\ell}$ with given strategies $\pi^{-\ell \star}$ of the rest of the players. We have observed in \cite{ichiba2020relative} that it is unlikely to obtain a tractable equilibrium from an $N$-player game, especially when $N$ is large. Ideally, when $N$ approaches infinity, the change of strategies of an individual player has nearly no influence of the mean field interaction term and thus the interacting system can be decoupled into a system of numerous independent investors (players) that behave in a similar way. Then, we can choose one representative player to study the optimization of the entire system. }
{ In this section, we first consider the ideal case in mean field game setup to focus on the optimization of one representative player. We investigate in the next section whether mean field game is a reasonable approximation when we only have finite players in reality.}

\subsection{Mean Field Games and Equilibrium}
\label{exdmfg}

{
Every player tries to minimize the relative amount of initial capital compared to that of the benchmark $\mathcal{V}(T)$. The cost functional of an investor $\ell$ is
\[
J^\ell (\text{\boldmath$\pi$}) := \inf \bigg \{ \omega^{\ell} > 0 \  \big| \, V^{\omega^{\ell} \mathcal{V}(0), \pi^{\ell}}(T) \geq e^{c_{\ell}} \mathcal{V}(T)\bigg \},
\]
for all admissible strategy profiles $\text{\boldmath$\pi$}(\cdot) = (\pi^\ell(\cdot))_{\ell \in \mathbb{N}}$. The influence of $\pi^{k}$, $k \neq \ell$, is implicitly defined in the wealth $V(\cdot)$ and benchmark $\mathcal{V}(\cdot)$.} We simplify this notation by considering the strategy $\pi$ of a representative player:
\begin{equation}
\label{jinfeq}
\begin{aligned}
J(\pi):= \inf \bigg \{ \omega > 0 \  \big| V^{\omega \mathcal{V}(0), \pi}(T) \geq e^c \mathcal{V}(T) \bigg \},
\end{aligned}
\end{equation}
for admissible strategy $\pi(\cdot) \in \mathbb{A}$ with the initial wealth $v = \omega \mathcal{V}(0)$. The exchangeability results are shown in Section~\ref{lims}  to validate the use of the representative player in the mean field game and in the search of the optimal arbitrage.
The infimum is attained $\inf_{\pi \in \mathbb{A}} J(\pi) = u^{\star}(T)$. We provide a PDE characterization of $u^{\ell \star}$ as the minimum non-negative solution of a Cauchy equation in Appendix~\ref{5.1}. In fact, there exists $\pi \in \mathbb{A}$ such that
\[
\omega \mathcal{V}(0) \exp \bigg \{ \int_0^T \left(\pi_t\right)'(\beta_t - \frac{1}{2} \alpha_t \pi_t
)dt + \int_0^T \left(\pi_t\right)' \sigma_{i}(t) dB(t)
\bigg\} \geq e^{c} \mathcal{V}(T)
\]
and hence, regarding the infimum in \eqref{jeeq}, we have 
\begin{equation}
\label{jeeq}
    J(\pi) = \frac{\mathcal{V}(T)}{\mathcal{V}(0)} \exp \bigg\{ - \int_0^T {\left(\pi_t\right)' \Big(\beta_t - \frac{1}{2} \alpha_t \pi_t
\Big)dt} - \int_0^T \left(\pi_t\right)' \sigma (t) dB(t) \bigg\}\leq \omega.
\end{equation}

{To define the mean field equilibrium we first specify the space of the probability measures we consider in this paper. We denote by $\mathcal{P}_2(K)$ the space of probability measures on $K$ with finite second moment endowed with the Wasserstein$-2$ (\cite[Def~6.1]{villani2009optimal}) metric, $\mathcal W_2$. 
For two measures $\mu,\nu$ on a separable metric space $(K,d)$, 
\begin{equation*}
\mathcal{W}_2(\mu,\nu) 
\coloneqq \Big (\min_{\pi\in\Pi(\mu,\nu)} \int d(x,y)^2\, d\pi(x,y)\Big)^{1/2} 
= \Big (\min_{X \sim \mu, Y \sim \nu} \mathbb{E}(d(X,Y)^2)\Big)^{1/2}.
\end{equation*}}
\begin{definition}(Strong mean field equilibrium)
\label{def: mfe}
We say that the measure $\mu$ which takes value in $\mathcal{P}_2(C([0,T]; \mathbb R_+))$ is a mean field equilibrium (MFE) if there exists an admissible strategy $\pi^{\star}(\cdot) \in \mathbb{A}$ such that 
\begin{itemize}
    \item the SDE system \eqref{eq: x}-\eqref{eq: z_c} admits unique in distribution solution $(\mathcal{X}, \mathcal{Z})$,
    \item the fixed point condition is $\mu = \text{Law}(V | B)$, 
    \item given $J(\cdot)$ defined in \eqref{jinfeq}, it satisfies
\begin{equation}
    \label{mfu}
J(\pi^{\star}) = \inf_{ \pi \in \mathbb{A}} J(\pi).
\end{equation}
\end{itemize}
\end{definition}
{Note that we consider the conditional path law $\mu(\omega) \in \mathcal{P}_2(C([0,T]; \mathbb R_+))$ for almost every $\omega$, as we will see later in the paper, we are particularly interested in the Wasserstein-2 distance between the probability measures of the
wealth processes under different market setups on $(C([0,T]; \mathbb R_+), \|\cdot\|)$, where $\|x\| = \sup_{t \in [0,T]} |x_t|$.} 
{In fact, with sufficient regularities of market coefficients (Assumption~\ref{asmp: lipfn}), we can use \cite[Theorem 1.2]{carmona2016lectures} to conclude that $\E[\sup_{0 \leq t \leq T} |X_i(t)|^2] < \infty$, $\E[\sup_{0 \leq t \leq T} |V(t)|^2] < \infty$ for $i = 1, \ldots, n$. Thus, the random variable $\E[|V_t|^2 | \mathcal F^B_t]$ has finite expectation for $t \in [0,T]$, and $\mu(\omega) \in \mathcal{P}_2(C([0,T]; \mathbb R_+))$ for almost every $\omega$.}

Under common noise $B$, $\mu_\cdot$ is a random measure defined as the conditional law of a state $V(t)$ given the realization of common noise. In general, the equilibrium measure $\mu$ may not be adapted to the filtration generated by the initial condition and common noise, and the notion of a weak mean field equilibrium has been introduced in \cite{carmona2016mean, 10.1214/22-AAP1876}. 
However, characterizing the optimal arbitrage quantity as in \ref{f1} requires the assumption $\mathbb{F} = \mathbb{F}^{B}$, the filtration generated by the $n$-dimensional Brownian motion $B(\cdot)$, in order to use martingale representation theorem. For this reason, we define the mean field equilibrium in the strong sense.

We now define the notion of uniqueness of the Nash equilibrium, which we adopt in our paper. Unlike the typical way of characterizing the unique optimal control for Nash equilibrium, in the relative arbitrage problem, investors focus on their optimal wealth and do not require the optimal control to be unique. 
\begin{definition}
\label{mfemu}
We say that the uniqueness holds for the MFG equilibrium if any two conditional path measures, $\mu^a, \mu^b \in \mathcal{P}_2(C([0,T]; \mathbb R_+))$, of wealth processes in Definition~\ref{def: mfe} , defined on the same filtered probability space $(\Omega, \mathcal{F}, \mathbb{F}, \mathbb{P})$, with the same initial law $\text{\boldmath$\mu$}_0 \in \mathcal{P}_2(\mathbb{R}_+)$, 
\begin{equation}
\label{muequal}
\mathbb{P}[\mu^a = \mu^b] = 1.    
\end{equation}
\end{definition}
In this case, $\{\text{\boldmath$\mu$}^a_t\}_{t \in [0,T]}$ and $\{\text{\boldmath$\mu$}^b_t\}_{t \in [0,T]}$ are indistinguishable processes in $C([0,T]; \mathcal{P}_2(\mathbb{R_+}))$. That is, $\text{\boldmath$\mu$}^a_t(\omega) = \text{\boldmath$\mu$}^b_t(\omega)$ for all $t$ and all $\omega$ outside a null set.

\subsection{Searching optimal strategies through fixed-point problems}

{To this end, we summarize the coupling relationship between market dynamics and the decision making of competitive investors. {As can be seen from previous discussion, the evolution of market dynamics are controlled by (the distribution of) the investment strategies, and the optimal strategy needs to be solved at the Mean Field Equilibrium. Thus, we search the optimal strategy through fixed-point arguments -- a strategy function is fixed and we solve for the optimal arbitrage quantity under the specified market dynamics. The optimal investment strategies are found if the corresponding strategy of optimal arbitrage coincide with the strategy we fixed at the very beginning. This is specified in the following steps.}

\bigskip 

For the sake of simplicity in notation, we denote $(\mathbf{x}, \mathbf{z}) := (\mathcal{X}(t), \mathcal{Z}(t))$ in the following statements and derivations. 

\textbf{Solving Mean Field Game solution over $[t,T]$, for every $t \in [0,T)$}
\begin{enumerate}[label=(\roman*)]
\item  We start with a given control process for the representative player $\pi \in \mathcal{F}^{MF}$. This determines the coefficients $\gamma(\cdot)$ and $\tau(\cdot)$, and thus the corresponding process $\mathbf{z}$ following \eqref{eq: z_c}. 
\item Solve the McKean-Vlasov system \eqref{eq: x}-\eqref{eq: z_c} to obtain the unique solution $(X, \mu, B, v_0)$. 
Define 
\begin{equation}
\label{eq: littleu}
    U(T-t) := \E \left[ \mathcal{V}^{\star}(T) L^{\star}(T) | \mathcal{F}^{MF}_t \right]\, /\, \mathcal{V}^{\star}(t).
\end{equation} 
Thus, the optimal initial proportion of the player $\ell$ and a representative player to achieve relative arbitrage can be characterized as $u^{\ell}(T-t,\mathbf{x}, \mathbf{z}) = e^{c_{\ell}} U(T-t)$ and $u(T-t,\mathbf{x}, \mathbf{z}) = e^{c} U(T-t)$, respectively.

{
\item With the optimum $\{u(T-t)\}_{t \in [0,T], \ell = 1, \ldots, N}$, where $u(T-t) := \inf_{\pi \in \mathbb{A}} J(\pi)$, we determine the corresponding optimal control $\pi^{\star} := \arg \inf_{\pi \in \mathbb{A}} J(\pi)$. Intuitively, the players seek the best response with respect to the other players. We define the solution map $\Phi$ such that $\pi^{\star} = \Phi(\pi)$. 
\item Find $\widehat{\pi}$, such that $\widehat{\pi} = \Phi(\widehat{\pi})$. The optimization problem in the previous step is thus solved through a fixed point problem in the path space of strategies. We specify the existence of such $\Phi(\cdot)$ in Theorem~\ref{mfeuniq}.
}
\end{enumerate}

{Note that the steps to seek equilibrium we introduce is different from the extended mean field game with joint measure of state and control is formulated in \cite{carmona2018probabilistic}. The existence and uniqueness result is provided in \cite[Theorem 4.65]{carmona2018probabilistic} for models with constant volatility and with the drift function affine in state and control. Here, a modified version of extended mean field game is discussed, where the state processes and cost functional depend on different measures, and uniqueness of Nash equilibrium is specified. Intuitively, }
an investor chooses a strategy profile, then solves for the current market \eqref{eq: x}-\eqref{eq: z_c}. {That is, the coefficients $(\gamma, \tau)$ are determined with the chosen strategy profile.}  With the diffusion coefficients in the market dynamics, we find the optimal arbitrage quantity \eqref{ggg}, which can be solved as the minimal nonnegative solution of \eqref{inequ}-\eqref{eqsol}. When $0<\delta \leq 1$, we have
\begin{equation}
\label{eq: fx-u0}
u(T-t, \mathbf{x}, \mathbf{z}) \mathcal{V}(t) = V(t), \quad (\mathbf{x}, \mathbf{z}) :=(\mathcal{X}(t), \mathcal{Z}(t)).
\end{equation}
Let $\mathbf{1}$ be the $n \times 1$ column vector that has all $n$ elements equal to one. By \eqref{eq: fx-u0}, it follows that $\mathbf{z} \cdot\mathbf{1} = \E[u^{\ell}(T-t, \mathbf{x}, \mathbf{z}) (\delta \mathbf{x} \cdot \mathbf{1} + (1-\delta) \mathbf{z} \cdot\mathbf{1})|\mathcal{F}_t^B]$ is $\mathcal{F}_t^B$-measurable. With $1 - (1-\delta) \E[u^{\ell}(T-t, \mathbf{x}, \mathbf{z})  | \mathcal{F}_t^B] > 0$, we get $\mathbf{z} \cdot\mathbf{1} = \frac{\delta \E[u^{\ell}(T-t, \mathbf{x}, \mathbf{z})  X(t) | \mathcal{F}_t^B]}{1 - (1-\delta) \E[u^{\ell}(T-t, \mathbf{x}, \mathbf{z})  | \mathcal{F}_t^B]}$.
This leads to {
\begin{equation}
\label{v0v0}
\begin{aligned}
        \mathcal{V}(t) &= \frac{\delta X(t)}{1-{(1-\delta)} u(T-t, \mathbf{x}, \mathbf{z})}, \\
    \text{and} \quad V(t) &= \frac{ u(T-t, \mathbf{x}, \mathbf{z}) \delta X(t)}{1- {(1-\delta)} \E[ u(T-t, \mathbf{x}, \mathbf{z})|\mathcal{F}_t^B ]},
\end{aligned}
\end{equation}
}
where $X(t)$ is the total capitalization defined in \eqref{Mportfolio-2}. We thus determine $\mathbf{z}$ from $V(\cdot)$ and $\pi(\cdot)$. The next step is to use this information obtained above to optimize relative arbitrage opportunities and solve for the corresponding optimal strategies. If the result is consistent with the initial guess of one's strategy profile, we obtain a Nash equilibrium. In the next section, we elaborate on the details of finding the equilibrium. We summarize the above flow in Figure~\ref{fig: fixedpt} below.}
{
\begin{figure}[ht] 
  \centering 
\begin{tikzpicture}[node distance = 3.5 cm, auto]

  \node (A) {$\pi(\cdot)$};
  \node (B) [right of=A] {$(\mathcal X, \mathcal{Z})$}; 
    \node (C)[right of=B] {$u(T-\cdot, \mathbf{x}, \mathbf{z})$};
  \draw[->] (A) -- (B) node[midway] {\eqref{eq: x}-\eqref{eq: z_c}}; 
  \draw[->] (B) -- (C) node[midway] {\eqref{inequ}-\eqref{eqsol}};
  \draw[->,bend left=12] (C) to node[midway, fill=white] {solve} (A);
\end{tikzpicture}
  \caption{The formulation of the fixed point problems.  {Note that to go from $(\mathcal X, \mathcal{Z})$ to $u(T-\cdot, \mathbf{x}, \mathbf{z})$, only the volatility of the market dynamics $(s(\cdot), \tau(\cdot))$ is needed to solve \eqref{inequ}-\eqref{eqsol}. For the arrow from $u(T-\cdot, \mathbf{x}, \mathbf{z})$ to get back to $\pi(\cdot)$, we elaborate in Section \ref{sec: solveNE} how to develop a consistency condition to check if the optimal strategy coincide with $\pi(\cdot)$ we started. The fixed point problem can start from any node (i.e., $\pi(\cdot)$, $u(T-\cdot, \mathbf{x}, \mathbf{z})$, or $(\mathcal X, \mathcal{Z})$) of this diagram by freezing the quantity of the node, compute the rest of the quantities sequentially according to the arrow of the diagram until reaching back to the node we started from. The fixed point solution corresponds to the quantity of the node that satisfies the consistency condition -- the frozen quantity as a starting point of the flow coincides with the quantity of the same node that is computed through the whole diagram.} } 
  \label{fig: fixedpt} 
\end{figure}
}

\subsection{Equilibrium based on functional generated portfolios}
\label{sec: solveNE}

Note that the fixed point mapping $\Phi(\cdot)$ is generally not unique and there can be multiple solutions for the optimal strategies that satisfy the unique mean field equilibrium in Definition~\ref{mfemu}. We use the following theorem to find the functional generated portfolio type of strategies and focus on the equilibrium when investors adopt these type of strategies. The portfolios generated by functions are proposed in \cite{fernholz1999portfolio}, and the functionally generated portfolio in a multi-player regime is discussed in \cite{yang2021topics}. {The optimal strategy is implicitly defined as the solution of the following fixed point problem, where the coefficients $\gamma(\cdot)$ and $\tau(\cdot)$ in $\mathcal{Z}(\cdot)$ depend on the strategy.} The proof details can be found in Appendix~\ref{appendix:proofs}.


\begin{thm}[Fixed point problem]
\label{mfeuniq}
Fix $\delta \in (0,1]$. Assume $u(\cdot) \in C^{1,3,3}([0,T]\times \mathbb R^n_+ \times \mathbb R^n_+)$. 
Under Assumption~\ref{xv}-\ref{hh}, the best response map $\Phi(\cdot)$ gives the following 
fixed point problem $\pi^{\star} = \Phi(\pi^{\star})$. 
{At each time $t$, $(\mathbf x, \mathbf z) = (\mathcal X(t), \mathcal Z(t))$,
\begin{equation}
\label{eq:bestresponse}
\begin{aligned}
\Phi(\pi)(t) & = \mathcal{X}(t) D_x \log u^{\pi}(T-t, \mathbf{x}, \mathbf{z})+ \mathcal Z(t) D_{z}  \log u^{\pi}(T-t, \mathbf{x}, \mathbf{z}) + \Pi(t).
\end{aligned}    
\end{equation}
In particular, the optimal strategy solved from the fixed point problem is a combination of capitalizations, trading volumes, and the benchmark portfolio \eqref{bigpi}. That is, for each $i = 1, \ldots, n$,
\begin{equation}
\label{anotherpi2}
\begin{aligned}
\pi^{\star}_i(t) &= X_i(t) D_{x_i} \log u(T-t, \mathbf{x}, \mathbf{z}) \\
& \quad {} + \mathcal Z_i(t)D_{z_i}  \log u(T-t, \mathbf{x}, \mathbf{z})\vert_{(\mathbf x, \mathbf z) = (\mathcal X(t), \mathcal Z(t))} + \Pi^{\star}_i(t)\bigg \vert_{(\mathbf x, \mathbf z) = (\mathcal X(t), \mathcal Z(t))}.
\end{aligned}
\end{equation}
}
At terminal time $t = T$, $\pi_i^{\star}(T) = \Pi^{\star}_i(T)$. $u(T, \mathbf{x}, \mathbf{z})$ reads as the value function of the mean field game under the equilibrium. 

\end{thm}

We can show in the same vein as \eqref{v0v0} that {for $u(T-t,\mathbf{x}, \mathbf{z}) =: e^{c} U(T-t,\mathbf{x}, \mathbf{z})$,
\begin{equation}
\label{eq: ufixedt}
   \begin{aligned}
           U(T-t) & = \mathbb{E}^{\mathbf{x}, \mathbf{z}} \big[ \mathcal{V}(T-t) L(T-t) | \mathcal{F}^{MF}_t \big]\, /\, \mathcal{V}(0)\\
           & = \frac{\delta}{\mathcal{V}(0)} \E^{\mathbf{x}, \mathbf{z}} \left[ \frac{\widehat{X}(T-t)}{1- {(1-\delta) e^c} \E[U(t) | \mathcal{F}_t^B]} \Bigg \vert \mathcal{F}^{MF}_t \right] .
   \end{aligned}
\end{equation}}
At Nash equilibrium, this leads to the fixed-point condition on the path space of optimal arbitrage quantity.
If the fixed point solution $U$
is unique, then Nash equilibrium $\mu^{\star} \in \mathcal{P}_2(C([0,T], \mathbb{R}_+))$ is unique in the sense of Definition~\ref{mfemu}. In the following, we provide the conditions under which the Nash equilibrium is unique.

We first state an assumption that leads to the uniqueness result below.
\begin{asmp}
\label{asmp:decouplex}
    For $t \in [0,T]$, $\tilde{\tau} := T-t$, denote the deflated market capitalization \eqref{xhat} as $\widehat{X}^U(\tilde{\tau}) := X^U(\tilde{\tau}) L^U(\tilde{\tau})$ where we emphasize in the superscript the dependence of the deflated capitalization on the optimal arbitrage quantity $U(\cdot)$. We assume that for every $u, v \in \mathcal U$, it satisfies 
    \begin{equation}
    \label{eq: asmp3}
    \sup_{\tilde{\tau} \in [0,T], \mathbf{x},\mathbf{z} \in \mathbb R^n_+ \times \mathbb R^n_+ }\E^{\mathbf{x}, \mathbf{z}} \left [ \lvert \widehat{X}^u(\tilde{\tau}) - \widehat{X}^v(\tilde{\tau}) \rvert^2 \bigg| \mathcal{F}_{t}^{MF}  \right] < M \lVert u-v\rVert_{\mathcal U}^2 
    \end{equation}
    for some constant $M > 0$. 
\end{asmp}

 Intuitively, Assumption~\ref{asmp:decouplex} means that the deflated market capitalization responds to the changes in the optimal arbitrage quantity (and thus the changes in the initial wealth of investors) in a stable manner.

\begin{thm}[Uniqueness of Nash equilibrium]
\label{thm:unique}
Under Assumption~\ref{xv}-\ref{asmp:decouplex}, consider the sub-problems $u(T-t)$ at every $t \in [0,T]$. Starting at time $t \in [0,T)$, take $u, v \in (0,1)$ as the different values of the initial relative arbitrage quantity of the investor $\ell$, as defined in \eqref{ggg}. Nash equilibrium $(\pi^{\star}, \mu^{\star})$ is unique when 

\begin{equation}
\label{eq: condforn}
\frac{1-\delta^2}{\delta} \overline{e^c}  \in (0,1),
 \quad M < x_0 \frac{\delta + \overline{e^{c}}(\delta^2 -1)}{1-(1-\delta) \overline{e^{c}}}.
\end{equation}
\end{thm}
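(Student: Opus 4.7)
The plan is to establish uniqueness by combining Proposition~\ref{prop:fixedptu} with a Banach contraction argument applied to the operator $\mathfrak{G}$ on $(\mathcal{U}, \|\cdot\|_{\mathcal{U}})$. Once $\mathfrak{G}$ is shown to admit a unique fixed point $U^{\star}$, Proposition~\ref{prop:fixedptu} yields uniqueness of $\mu^{\star}$ in the sense of Definition~\ref{mfemu}, and the representation \eqref{anotherpi2} in Theorem~\ref{mfeuniq} upgrades this to uniqueness of $\pi^{\star}$.

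The first half of \eqref{eq: condforn} plays the role of a non-degeneracy condition on the auxiliary function $f(u) = [1-(1-\delta)\overline{e^c}\, u]^{-1}$ introduced in \eqref{fu}: since $\tfrac{1-\delta^2}{\delta}\overline{e^c} < 1$ implies a fortiori $(1-\delta)\overline{e^c} < 1$, one obtains a uniform bound $C_f = [1-(1-\delta)\overline{e^c}]^{-1} < \infty$ on $f$ over $\mathcal{U}$ and the Lipschitz estimate $|f(u)-f(v)| \le (1-\delta)\overline{e^c}\, C_f^2\, |u-v|$ for all $u,v \in \mathcal{U}$. For $U, V \in \mathcal{U}$ I then decompose
\begin{equation*}
\mathfrak{G}(U) - \mathfrak{G}(V) = \mathcal{V}^{-1}(0)\, \E^{\mathbf{x},\mathbf{z}}\big[(\widehat{X}^U - \widehat{X}^V) f(U)\,\big|\,\mathcal{F}^{MF}_{T-t}\big] + \mathcal{V}^{-1}(0)\, \E^{\mathbf{x},\mathbf{z}}\big[\widehat{X}^V (f(U) - f(V))\,\big|\,\mathcal{F}^{MF}_{T-t}\big],
\end{equation*}
controlling the first summand by the stability estimate \eqref{eq: asmp3} of Assumption~\ref{asmp:decouplex}, and the second summand by the Lipschitz bound on $f$ together with the supermartingale estimate $\E^{\mathbf{x},\mathbf{z}}[\widehat{X}(\tau)] \le x_0$ that follows from \eqref{xhat} since $\widehat{X} = XL$ is a nonnegative local martingale started at $x_0$.

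Taking the supremum over $\tau \in [0,T]$ and $(\mathbf{x},\mathbf{z})$, the two pieces combine into $\|\mathfrak{G}(U) - \mathfrak{G}(V)\|_{\mathcal{U}} \le K\, \|U-V\|_{\mathcal{U}}$, where $K$ is an explicit function of $M$, $C_f$, $x_0$, $\mathcal{V}(0)$, $\delta$ and $\overline{e^c}$. Rewriting $\delta + \overline{e^c}(\delta^2-1) = \delta - (1-\delta)(1+\delta)\overline{e^c}$, an algebraic simplification then identifies the second bound in \eqref{eq: condforn} as precisely the requirement $K < 1$. Banach's theorem delivers the unique fixed point $U^{\star}$; uniqueness of the law $\mu^{\star}$ then follows from \eqref{vfix} in Proposition~\ref{prop:fixedptu}, and uniqueness of $\pi^{\star}$ is read off from \eqref{anotherpi2}.

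The principal technical obstacle is to balance the two ingredients of the contraction constant: the Lipschitz constant of $f$ blows up like $(1-(1-\delta)\overline{e^c})^{-2}$ as the first half of \eqref{eq: condforn} is saturated, while the stability constant $M$ of Assumption~\ref{asmp:decouplex} encodes how sensitively the conditional McKean--Vlasov dynamics \eqref{eq: x}--\eqref{eq: z_c} depend on the candidate arbitrage function through the coupling in the coefficients $\psi(\cdot)$ and $\tau(\cdot)$. Carrying the exact factors of $\mathcal{V}(0)$ and $x_0$ through the estimate, rather than absorbing them into generic constants, is what produces the sharp threshold in \eqref{eq: condforn} instead of a looser sufficient condition.
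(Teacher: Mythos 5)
Your proposal is correct and follows essentially the same route as the paper's proof: the same triangle-inequality splitting of $\mathfrak G(U)-\mathfrak G(V)$ into a term handled by the Lipschitz bound on $f$ together with the supermartingale estimate $\E^{\mathbf{x},\mathbf{z}}[\widehat X(\tau)]\le x_0$, and a term handled by the stability constant $M$ of Assumption~\ref{asmp:decouplex}, followed by the Banach fixed-point theorem, with the algebraic rearrangement of the contraction condition $K<1$ yielding exactly \eqref{eq: condforn}. No substantive difference from the paper's argument.
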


The proof (Appendix~\ref{appendix:proofs}) starts with the justification for the above connection between fixed point problems in the strategy space and proves a contraction of the fixed point mapping in the path space of optimal arbitrage quantities.

\subsection{Examples}

To this end, we provide the following example to show a viable optimal arbitrage solution in a realistic stock market system.

\begin{example}[Admissible strategy]
    If the market coefficients defined in Assumption~\ref{xv} are of the form
    \[
    (b_i, s_{ik}, \gamma_i, \tau_{ik})(\mathcal{X}(t), \mathcal{Z}(t)) = (\widetilde{b}_i, \widetilde{s}_{ik}, \widetilde{\gamma}_i, \widetilde{\tau}_{ik})(\mathbf{m}(t), \mathbf{n}(t)),
    \]
    where $\mathbf{m}(\cdot)$ is the market portfolio and $\mathbf{n}_i(\cdot) := \frac{\mathcal{Z}_i(\cdot)}{Z(\cdot)}$, $\mathbf{n}(\cdot) := (\mathbf{n}_1(\cdot), \ldots, \mathbf{n}_n(\cdot))$.
Then 
\[
\begin{aligned}
    X_i(t) D_{x_i} \log u^{\star}(T-t, \mathbf{x}, \mathbf{z}) = \mathbf{m}_i \Big(D_{m_i} &\log \widetilde{U}(T-t, \mathbf{m}(t), \mathbf{n}(t)) \\
    &- \sum_{j=1}^n \mathbf{m}_j D_{m_j} \log \widetilde{U}(T-t, \mathbf{m}(t), \mathbf{n}(t))\Big),
\end{aligned}
\]
\[
\begin{aligned}
\mathcal Z_i(t) D_{z_i} \log u^{\star}(T-t, \mathbf{x}, \mathbf{z}) = \mathbf{n}_i \Big(D_{n_i} &\log \widetilde{U}(T-t,\mathbf{m}(t), \mathbf{n}(t)) \\
&- \sum_{j=1}^n \mathbf{n}_j D_{n_j} \log \widetilde{U}(T-t,\mathbf{m}(t), \mathbf{n}(t))\Big),
\end{aligned}
\]
in \eqref{anotherpi2}. Hence, assume $\widetilde{U}(\cdot) \in C^{1,3,3}([0,T] \times \mathbb{R}^ n_+ \times \mathbb{R}^n_+)$, we get from \eqref{anotherpi2} that the optimal strategy in equilibrium satisfies $\sum_{i=1}^n \pi^{\star}_i(t) = 1$. 
\end{example}

\begin{example}
\label{ex: vsm}
We construct the capitalization coefficients of stocks using this similar idea in volatility-stabilized market models (\cite{fernholz2005relative}). For $1 \le i, j \le n$, with infinite number of investors,
\begin{equation}
\label{eq: coeff-vsm}
    \beta_i(t) = \frac{C_x}{\mathbf{m}_i(t) \mathcal Z_i(t)} , \quad a_{ij} = \frac{X_i(t)}{\mathcal Z_i(t)} X(t)\delta_{ij},
\end{equation}
where $\delta_{ij} = 1$, when $i = j$; and $\delta_{ij} = 0$ otherwise, when $i \neq j$. $\mathcal V_0 = \frac{x_0}{2 - \frac{1}{N} \sum_{k=1}^N u^k(T)}$, and $y_0 = \frac{x_0}{2 - \frac{1}{N} \sum_{k=1}^N u^k(T)}- \frac{x_0}{2}$. This construction possess the \textit{leverage effect} in volatility-stabilized market models, where the rate of return and volatility have a negative correlation with the capitalization of the stock relative to the market $\{\mathbf{m}_i(t)\}_{i = 1, \ldots, n}$. 
Smaller stocks tend to have higher volatility than larger stocks. Further, Assumption~\ref{hasmp} is satisfied as \[
D_i H(\mathbf{x}) := \frac{b_i(\mathbf{x},\mathbf{z})}{a_{ii}(\mathbf{x},\mathbf{z})}= \frac{X_i(t) \beta_i(\mathbf{x},\mathbf{z})}{a_{ii}(\mathbf{x},\mathbf{z})} = \frac{C_x}{x_i},
\]
\[k(\mathbf{x},\mathbf{z}) := - \sum_{i=1}^n \sum_{j=1}^n \frac{a_{ij}( \mathbf{x}, \mathbf{z})}{2} [ D_{ij}^2 H( \mathbf{x}) + D_i H ( \mathbf{x},\mathbf{z}) D_j H( \mathbf{x})] = 0,\]
and the market price of risk follows
\[\theta_i(\mathcal{X}(t),\mathcal{Z}(t)) = s_{ii}'(\mathcal{X}(t),\mathcal{Z}(t)) D_i H(\mathcal{X}(t)) = C_x ( X(t))^{\frac{1}{2}} (X_i(t) \mathcal Z_i(t))^{-\frac{1}{2}},\] $i = 1, \ldots, n,$
with $H( \mathbf{x}) = C_x \sum_{i=1}^n \log x_i$. $L(t) = \prod_{j=1}^n \frac{x_j}{X_j(t)}$, $\widehat{X}(t) = X(t)\prod_{j=1}^n \frac{x_j}{X_j(t)}.$
By \eqref{ggg},
\begin{equation}
\label{eq: uexample}
\begin{aligned}
u^{\ell}(T-t) = & \frac{e^{c_{\ell}} X_1(t) \ldots X_n(t)}{\mathcal{V}(t)}\E \left[\frac{\mathcal{V}(T)}{X_1(T) \ldots X_n(T)} \Big| \mathcal{F}^{MF}_t \right]\\
 = & \frac{\delta e^{c_{\ell}} X_1(t) \ldots X_n(t)}{\mathcal{V}(t) (1-(1-\delta)\overline{e^c})}\E \left[\frac{X(T)}{X_1(T) \ldots X_n(T)} \Big| \mathcal{F}^{MF}_t \right].
\end{aligned}
\end{equation}

If all the investors are homogeneous in the game, that is, they have the exact same preference level $c_{\ell} = c$ for all $\ell \in \mathbb{N}$, then any player is a representative player and their optimal strategy is the same. In this case, the strategy can be rewritten as a function of the stock capitalization and the (common) wealth process, $Z(\cdot) = \E[V(\cdot)|\mathcal{F}_t^B] = V(\cdot)$, and the coefficient $\tau_i(\mathbf{x}, v) = v \phi_i(\mathbf{x}, v) \sigma(\mathbf{x}, v)$. Hence, it follows
    \begin{equation*}
        \begin{split}
    \phi_i( \mathbf{x}, v) & = \mathbf{x}_i D_{x_i} \log u^{\ell \star}(T-t, \mathbf{x}, v) \\
    & \quad {} + v \phi_i( \mathbf{x}, v) D_{v}  \log u^{\ell \star}(T-t, \mathbf{x}, v) +\frac{\delta \mathbf x_i + (1-\delta)v}{\delta \mathbf{x} \cdot \mathbf{1}+ (1-\delta)v}.
\end{split}
\end{equation*}
Solve the fixed point problem above, we have 
\begin{equation}
    \label{phi_cor2}
    \begin{aligned}
    \phi^{\star}_i( \mathbf{x}, v) = \frac{\mathbf{x}_i(t) D_{x_i} \log u^{\ell \star}(T-t, \mathbf{x}, v) +\frac{\delta \mathbf{x}_i+ (1-\delta)v}{ \delta \mathbf{x} \cdot \mathbf{1}+ (1-\delta)v}}{1 - v D_{v}  \log u^{\ell \star}(T-t, \mathbf{x}, v)}.
\end{aligned}
\end{equation}

\end{example}

\section{Approximating finite player games of relative arbitrage optimization}
\label{lims}

{In this section, we consider a more realistic problem with finitely many players under Assumption~\ref{xv}. We construct a general infinite interacting system using conditional McKean-Valsov equations as well as a finite counterpart of the market, without a-priori assumption on the exchangeability of the particles. That is, we investigate whether the finite player game with every player interacts with each other can be simplified to the situation of a representative player interacting with the entire distribution of players despite the heterogeneity nature of players.
Then, we study the mean field approximation of the finite system and derive propagation of chaos result to show that the Nash equilibrium of the finite-player game converges to the mean field euilibrium.
}

{\begin{definition}[Mean-field interaction measures]
\label{def: mut}
With a given initial condition $\mu_0 \in \mathcal{P}_2(\mathbb{R}_+)$ from which the initial wealth $v_0$ are sampled,
we define the conditional law of $V(t)$ given $\mathcal{F}^B_t$ as
\begin{equation}
\label{muemf}
    \mu_t := Law \left(V(t) \big| \mathcal{F}_t^B \right), \quad t \ge 0 
\end{equation}
and the conditional law of $(V(t), \pi(t))$, given $\mathcal{F}^B_t$,
is
\begin{equation}
\label{nuemf}
\nu_t := Law(V(t), \pi(t)|\mathcal{F}_t^B) , \quad t \ge  0 . 
\end{equation}
 In particular, the trading volume $\mathcal Z(t)$ can be formulated as $\mathcal Z(t) = \int_{\mathbb{R}_+ \times \mathbb A} x y \, \nu_t (dx \times d y) $, $t \ge 0 $, where $x$ represents the wealth $V(t)$ and $y$ represents the strategies defined in the admissible set $\pi(t) \in \mathbb A$.
\end{definition}
}

{
We consider the market under the Markovian model as in Assumption~\ref{xv}\ref{thomo}. If $\mu_t$, $\nu_t$ are not affected by the investors, then investors face identical and independent optimization problems since their underlying dynamics and their benchmarks are i.i.d. In this case, we can then consider a representative player whose wealth $V(t)$ is generated from a strategy $\pi(t) \in \mathbb{A}$ through \eqref{wealth}. Denote $\mathcal{X} (t) = (X_1(t), \ldots, X_n(t))$, 
\[
    dV^{\ell}(t) = V^{\ell}(t) \left( \sum_{i=1}^{n}\pi^{\ell}_i(t) \beta_{i}( \mathcal X(t), \nu_t)dt + \sum_{i=1}^{n} \sum_{k=1}^{n} \pi^{\ell}_i(t) \sigma_{ik}(\mathcal X(t), \nu_t)dB_{k}(t) \right); 
\]
with $ V^{\ell}(0) = v^{\ell}$. 
We discuss the viability of simplifying the problem as a representative player in Section~\ref{limitsection}. Let us consider 
\[
(\mathcal{X}, V, \nu, B) \in (C([0,T], \mathbb{R}^n_+), C([0,T], \mathbb{R}_+), \mathcal{P}_2(C([0,T], \mathbb{R}_+ \times \mathbb{A})), C([0,T], \mathbb{R}^n)).
\]
The goal for this section is to study the optimization of relative arbitrage opportunities in the conditional McKean-Vlasov system for a representative player, that is, in the system
\begin{equation}
\label{mcvsde}
    d \mathcal{X}(t) = \mathcal{X}(t)\beta(\mathcal{X}(t), \nu_t)dt + \mathcal{X}(t) \sigma(\mathcal{X}(t), \nu_t)dB(t), \quad \mathcal{X}_0 = \mathbf{x};
\end{equation}
\begin{equation}
\label{mcvsde2}
    dV(t) = V(t) \pi'(t) \beta(\mathcal{X}(t), \nu_t)dt + V(t) \pi'(t) \sigma(\mathcal{X}(t), \nu_t)dB(t), \quad V(0) = v_0,
\end{equation}
where $B(\cdot) = (B_1(\cdot), \ldots, B_n(\cdot))$ is $n$-dimensional Brownian motion. $\nu$ is as defined in Definition~\ref{def: mut}. 
We can show the well-posedness of the McKean-Vlasov system under standard assumptions on market coefficients. The formal statements and the proof can be found in Appendix~\ref{appendix: nusystem}.
}

\subsection{{The limit of finite dynamical systems}}
\label{limitsection}

{In reality, there is a finite number of investors in the market and one would wonder if the aforementioned mean-field formulation is a valid model for a finite system.}

{To this end, we define the finite counterpart of the model setup. We do not require the $N$-player system to share the same Wiener process across $N$. Indeed, the systemic effect of random noises towards the market might be different when we consider a finite or infinite group of investors interacting with the market. We let Wiener process} $W^N \in C([0, T], \mathbb R^n)$ be the common noises for the $N$-player system \eqref{nparticle}-\eqref{eq: nv}, and the filtered probability space $(\Omega, \mathcal{F}, \mathbb{P})$ support $W^N$ as well. {This is more realistic as we do not need to fix the exact same trajectories of noise for every system, and the finite system is not merely a truncation of the given infinite system under the same driving Wiener process across $N$.}
We assume that $B$ and $W^N$ for each $N$ are adapted to $(\mathcal F_t^{MF})_{t \in [0,T]}$, are independent of each other, and are independent of $\mathbf{x}_0$ and $\mathbf{x}_0^N$. The filtration $\left( \mathcal{F}_t \right)_{t \in [0,T]}$ is the natural filtration generated by $W^N$ and $B$.

We first define the empirical measure in the finite-particle system 
.
\begin{definition} [Empirical measures in the finite $N$-particle system]
\label{nunt}
Consider $\mathcal{F}$-measurable $ C([0,T]; \mathbb{R}_+) \times C([0,T]; \mathbb{A})$-valued  random variables $(V^{N, \ell}, \pi^{\ell})$ for every investor $\ell = 1, \ldots, N$. We define empirical measures $\nu^N \in \mathcal{P}_2(C([0,T], \mathbb{R}_+) \times C([0,T], \mathbb{A})) \cong \mathcal{P}_2(C([0,T], \mathbb{R}_+ \times \mathbb{A}))$ of the random vectors $(V^{N, \ell}(t), \pi^{\ell}(t))$ as
\begin{equation}
\nu^N_t := \frac{1}{N} \sum_{\ell=1}^N {\bm \delta}_{(V^{N, \ell}(t), \pi^{\ell}(t))}, \quad  t \ge 0, 
\end{equation}
where ${\bm \delta}_x$ is the Dirac delta mass at $x \in \mathbb{R}_+ \times \mathbb{A}$. Thus for any Borel set $A \subset \mathbb R_{+} \times \mathbb A$,
\begin{equation}
\nu^N_t(A) = \frac{1}{N} \sum_{\ell=1}^N {\bm \delta}_{(V^{N, \ell}(t), \pi^{\ell}(t))} (A)  = \frac{1}{N} \cdot \# \{\ell \leq N: (V^{N, \ell}(t), \pi^{\ell}(t)) \in A\} , 
\end{equation}
where $\# \{\cdot\}$ represents the cardinality of the set.
{In the same vein, for $t\ge 0 $, we define the empirical measures $\mu^N \in \mathcal{P}_2(C([0,T], \mathbb{R}_+))$ of the wealth by 
\begin{equation} \label{eq: NE def2}
\mu_t^N := \frac{1}{N} \sum_{\ell=1}^N {\bm \delta}_{V^{\ell}(t)}.
\end{equation}
}
\end{definition}

For a fixed $N$, with $\nu^N_t$ in Definition~\ref{nunt} that generalizes the interaction term $\mathcal{Y}(t)$, we can construct a general market system as
\begin{equation}
\label{nparticle}
 d\mathcal{X}^N(t) = \mathcal{X}^N(t) \beta( \mathcal{X}^N(t), \nu^N_t)dt + \mathcal{X}^N(t) \sigma( \mathcal{X}^N(t), \nu^N_t)dW^N(t); 
\end{equation}
with $\mathcal{X}^N(0) = \mathbf{x}^N_0$ 
and for $\ell = 1, \ldots , N$, $V^{N, \ell}(0) = v^{N, \ell}$,
\begin{equation} \label{eq: nv}
\begin{split}
   \hspace{-3cm}  dV^{N, \ell}(t) & = V^{N, \ell}(t) \Big( \sum_{i=1}^{n}\pi^{\ell}_i(t) \beta_{i}(\mathcal X(t), \nu^N_t)dt  \\
    & \hspace{3cm} {} + \sum_{i=1}^{n} \sum_{k=1}^{n} \pi^{\ell}_i(t) \sigma_{ik}(\mathcal X(t), \nu^N_t)dW^N_{k}(t) \Big).
    \end{split}
\end{equation}

We first show the convergence of the continuous stochastic process $\mathcal{X}^N(t)$.

\begin{prop}
\label{xlimit}
Given an $n$-dimensional vector-valued process $\mathcal{X}^N(t)$ \eqref{nparticle} for $N \in \mathbb{N}$, whose coefficients satisfy Assumption~\ref{xv}. There exists a $n$ dimensional continuous process $\mathcal{X}$ defined on the probability space $(\Omega, \mathcal{F}, \mathbb{P})$, such that $\{\mathbb{P}^{\mathcal{X}^N}\}$ converges weakly to $\{\mathbb{P}^{\mathcal{X}}\}$ as $N \rightarrow \infty$. 
\end{prop}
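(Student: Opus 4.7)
The plan is to establish weak convergence of $\{\mathbb P^{\mathcal X^N}\}$ via a tightness-and-identification argument, then strengthen this to almost sure convergence using a coupling argument combined with the strong uniqueness from Proposition~\ref{eumckean-1}.

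First, I would derive uniform-in-$N$ moment bounds $\sup_N \E[\sup_{t \in [0,T]}|\mathcal{X}^N(t)|^p] < \infty$ for $p \geq 2$. Applying It\^o's formula to $|\mathcal X^N(t)|^p$, together with the linear growth condition in Assumption~\ref{xv}\ref{asmp: weaksol}, the integrability assumption on $\mathbf{x}_0^N$, and Gronwall's inequality yields this bound. These moment estimates combine with the Burkholder-Davis-Gundy inequality and H\"older's inequality to produce uniform increment estimates of the form $\E[|\mathcal{X}^N(t) - \mathcal{X}^N(s)|^p] \leq C |t-s|^{p/2}$. Kolmogorov's tightness criterion then delivers relative compactness of $\{\mathbb P^{\mathcal X^N}\}$ on $C([0,T]; \mathbb R^n_+)$, so by Prohorov's theorem a subsequence converges weakly to some limit law $\mathbb P^{\mathcal X}$.

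To identify the limit, I would use the Lipschitz continuity of $(b, s)$ in Assumption~\ref{xv}\ref{asmp: weaksol} together with the convergence of the empirical measure $\mu^N_t$ (which depends on the joint propagation of chaos for the pair $(\mathcal{X}^N, V^{N, \ell})$, developed elsewhere in the paper and referenced via Proposition~\ref{tight}) to pass to the limit in the drift and diffusion coefficients of \eqref{nparticle}. The limit $\mathcal X$ then satisfies the McKean-Vlasov system \eqref{eq: x}, and the strong uniqueness guaranteed by Proposition~\ref{eumckean-1} ensures the entire sequence converges in law rather than only along a subsequence. For the almost sure statement, I construct $\mathcal X$ on the same filtered probability space as $\mathcal X^N$, apply It\^o's formula to $|\mathcal X^N(t) - \mathcal X(t)|^2$, and use the Lipschitz bounds together with a $W_2$-Lipschitz estimate on the coefficients to obtain
\[
\E\Big[\sup_{s \in [0,t]}|\mathcal X^N(s) - \mathcal X(s)|^2\Big] \leq C \int_0^t \E\big[ W_2^2(\mu^N_s, \mu_s) \big] \, ds + r(N),
\]
where $r(N) \to 0$ captures the initial-data discrepancy $\E[\|\mathbf x_0^N - \mathbf x_0\|^2]$. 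Gronwall's lemma and the convergence $\E[W_2^2(\mu^N_s, \mu_s)] \to 0$ from propagation of chaos then yield the pointwise a.s. limit (after passing to a further subsequence if necessary).

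The principal obstacle is the coupled nature of $(\mathcal X^N, V^{N,\ell})$: the coefficients driving $\mathcal X^N$ depend on the empirical measure $\mu^N$ of the wealth processes, so the convergence of $\mathcal X^N$ cannot be decoupled from a joint propagation of chaos for the wealth particles. Moreover, the common noise $B$ forces $\mu_t$ to be the conditional law $\mathrm{Law}(V(t) \mid \mathcal F^B_t)$, not an unconditional one, so the empirical measures must be shown to converge to this conditional law with the correct filtration structure; as flagged in Subsection~\ref{limitsection}, this loss of continuity of $t \mapsto \mathcal L(V_t, \pi_t \mid B)$ is precisely what distinguishes extended mean-field games with common noise and needs to be bypassed by exploiting the simplex-valued controls together with the equilibrium structure \eqref{anotherpi2}.
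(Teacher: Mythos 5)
Your proposal is correct, and its first half coincides with the paper's own argument: the paper also establishes tightness of $\{\mathbb{P}\circ(\mathcal{X}^N)^{-1}\}$ through an initial-moment bound plus an increment estimate $\E[|\mathcal{X}^N(t)-\mathcal{X}^N(s)|^{\delta_1}]\leq C|t-s|^{1+\delta_2}$ (obtained via the linear growth condition and It\^o's isometry, where you invoke BDG -- an immaterial difference), and then applies Prokhorov's theorem. Where you genuinely diverge is in what comes after: the paper simply asserts that the limit law is unique and that an a.s.\ convergent subsequence exists, without identifying the limit or explaining how weak convergence is upgraded to almost sure convergence on the original space. You instead identify the limit as the solution of the McKean--Vlasov equation \eqref{eq: x} by passing to the limit in the coefficients, invoke the uniqueness of Proposition~\ref{eumckean-1} to get full-sequence convergence, and then run a synchronous-coupling argument ($|\mathcal{X}^N-\mathcal{X}|^2$, Lipschitz/$W_2$ estimates, Gronwall, and $\E[W_2^2(\mu^N_s,\mu_s)]\to 0$) to obtain the pointwise limit; this supplies exactly the steps the paper leaves implicit and is arguably the more rigorous route. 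One caution: your identification and coupling steps lean on the propagation-of-chaos statement (Proposition~\ref{tight}), which in the paper is proved \emph{after} Proposition~\ref{xlimit} and whose proof itself involves an estimate containing $\|\mathcal{X}^N_r-\mathcal{X}_r\|^2$; to avoid circularity you should make the Gronwall estimate joint in the pair $(\mathcal{X}^N-\mathcal{X},\,V^{N,\ell}-U^{\ell})$ (equivalently, prove the coupled $L^2$ bound and the empirical-measure convergence simultaneously), rather than quoting Proposition~\ref{tight} as an external input. The same joint estimate also cleans up the uniform-in-$N$ moment bound, since $M_2(\nu^N_t)$ involves the wealth particles and cannot be bounded by an estimate on $\mathcal{X}^N$ alone.
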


{
We can also show this result for $\{V^{N, \ell}, \mu^N, \nu^N, W^N\}_{N \in \mathbb N}$. With $\pi \in \mathbb A$, by Prokhorov's theorem and \cite{sznitman1991topics}, we can show the tightness of $\{\mu_N\}_{N \geq 1}$ and $\{\nu_N\}_{N \geq 1}$ on $\mathcal{P}_2(C([0,T]; \mathbb{R}_+))$ and $\mathcal{P}_2(C([0,T]; \mathbb{R}_+))$, respectively, equipped with the Wasserstein distance.$\{\mathbb{P} \circ (W^N)^{-1}\}_{N \geq 1}$'s tightness can be shown by Kolmogorov's criterion. We provide the streamlined proof in Appendix~\ref{appendix: proof5}.}
\begin{prop}
\label{prop: tightv}
Under Assumption~\ref{xv}-\ref{asmp: lipfn}, the family $\{V^{N, \ell}, \mu^N, \nu^N, W^N\}_{N \in \mathbb N}$ is tight for each $\ell = 1,\ldots, N$, $N \in \mathbb N$.
\end{prop}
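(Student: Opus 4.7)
The plan is to exploit the fact that tightness of laws on a product of Polish spaces is equivalent to marginal tightness, so it suffices to handle $W^N$, $V^{N,\ell}$, and $\mu^N$ separately. The $W^N$ component is immediate: each $W^N$ is an $n$-dimensional standard Brownian motion on $[0,T]$, so $\{\mathrm{Law}(W^N)\}_N$ is a singleton on $C([0,T],\mathbb{R}^n)$ and therefore trivially tight.

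For the wealth process $V^{N,\ell}$, I would apply the Kolmogorov--Centsov-type criterion recalled in Proposition~\ref{xlimit}. Starting from the SDE \eqref{eq: nv} and invoking the Lipschitz and linear-growth bounds of Assumption~\ref{xv}\ref{asmp: weaksol} (extended via Assumption~\ref{asmp: lipfn} to the coefficients appearing in the $N$-particle model), I would first establish a uniform-in-$N$ moment bound
\[
\sup_{N}\,\E\Bigl[\sup_{0\le t\le T}|V^{N,\ell}(t)|^{\delta_1}\Bigr] < \infty
\]
for some $\delta_1 > 2$, by a Gronwall argument after applying BDG to the stochastic-integral part and using the second-moment control of $\nu^N_t$ that comes from the simplex-valued strategies and the moment bound on $V^{N,\ell}$ itself. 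Combining this with the standard drift/diffusion splitting of $V^{N,\ell}(t)-V^{N,\ell}(s)$ gives an increment estimate of the form $\E[|V^{N,\ell}(t)-V^{N,\ell}(s)|^{\delta_1}] \le K|t-s|^{1+\delta_2}$ with $\delta_2 = \delta_1/2 - 1 > 0$, yielding tightness in $C([0,T],\mathbb{R}_+)$.

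For the empirical-measure component $\mu^N$, I would appeal to the Sznitman-type criterion: a family of laws of random probability measures on a Polish space $E$ is tight as soon as the family of intensity measures $\{\E[\mu^N]\}_N$ is tight on $E$. By the exchangeability of the investors built into the $N$-particle system, the intensity measure $\E[\mu^N]$ coincides with the common one-particle law $\mathrm{Law}(V^{N,\ell})$, whose tightness was just shown. A Markov-type inequality at the level of compact sets then upgrades tightness of the intensity measures to tightness of the random measures themselves in $\mathcal{P}_2(C([0,T],\mathbb{R}_+))$; if one instead reads $\mu^N$ as the joint wealth-strategy empirical measure $\nu^N$, the same argument applies after the strategy marginal is controlled, which benefits from the compactness of the simplex $\Delta_n$.

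The main obstacle I anticipate is not any of the individual tightness arguments, but rather securing the uniform-in-$N$ moment bound on $V^{N,\ell}$ in the presence of the $\nu^N_t$-dependent coefficients. Because $\nu^N_t$ is itself a functional of the $V^{N,\ell}$'s and of the strategies $\pi^\ell$, closing the Gronwall loop requires that the Wasserstein moment $M_2(\nu^N_t)$ be dominated by $\sup_\ell \E[|V^{N,\ell}(t)|^2]$ uniformly in $N$. This dominance follows from $\pi^\ell(t) \in \Delta_n$ and the $C^{1,3,3}$-regularity used around Theorem~\ref{mfeuniq}, but assembling the estimate carefully, together with the attendant time-regularity of $\pi^\ell$ needed if one wants tightness of the strategy marginal, is the delicate step of the argument.
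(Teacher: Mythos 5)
Your proposal is correct and follows essentially the same route as the paper: uniform second-moment and increment estimates for $V^{N,\ell}$ from the linear-growth bounds, It\^o isometry and Gronwall, tightness of the empirical measures via the Sznitman/Carmona--Delarue intensity-measure argument (with compactness of the strategy simplex handling the joint measure), and standard tightness for $W^N$. The only cosmetic differences are that you use the Kolmogorov--Centsov moment criterion where the paper argues equicontinuity plus Arzel\`a--Ascoli, and you observe that the laws of $W^N$ are all Wiener measure rather than citing Kolmogorov's criterion; your remark about closing the Gronwall loop through $M_2(\nu^N_t)$ correctly flags a point the paper itself treats only briefly.
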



\subsection{Approximations of \texorpdfstring{$N$}{Lg}-player Nash equilibrium}
\label{nemfe}

{
We have mentioned that in our extended mean field game formulation, we pursue the equilibrium for a so-called representative player, which is a generic player in the infinite-population. It is a straightforward result from the law of large numbers that the mean field measures are the weak limit of the empirical measure in finite player game when the players are exchangeable. That is, the interacting particle system stays the same after permuting the index for the players. However, because of the heterogeneity of the players in this paper, the exchangeability of each player's state is not straightforward. Thus, in this section we first provide the propagation of chaos result to justify the use of the representative player and the formulation of the mean field game problems as the limits of their finite counterparts. 
In the end, we prove that the mean field equilibrium is an appropriate approximation of $N$-player relative arbitrage problem by showing that the MFE we obtain agrees with the limit of the finite equilibrium.
}

{We denote $\mathcal{C}^{A} = C([0,T]; \mathbb{R}_+ \times \mathbb{A})$ as the path space equipped with the supremum norm $\|x\| = \sup_{t \in [0,T]} |x_t|$. }
{As solving a mean field game is computationally much easier than solving a finite player game, the idea is to check whether the optimal strategy function solved at mean field equilibrium can be used for $N$-player game. We investigate the relationship between the mean field measures at equilibrium and the empirical measures in the below theorem.}

{Under Assumptions~\ref{xv}- \ref{asmp: lipfn}, let $\phi : \mathbb{R}_+ \times \mathcal{P}_2(\mathbb{R}_+ \times \mathbb{A}) \times \mathbb{R} \rightarrow \mathbb{A}$ as the optimal strategy function following \eqref{anotherpi2}. We denote $\mu^{\star}$ and $\nu^{\star}$ as the conditional path laws in Definition~\ref{def: mut} at mean field equilibrium, where the strategies follow $\pi^{\star}_i(t) = \phi(\mathcal{X}(t), \nu_t, c)$. We denote $\mu^{N \star}$ and $\nu^{N \star}$ as the empirical measures (Definition~\ref{nunt}) of the $N$-player game counterpart with the same strategy function $\pi^{\ell \star,N} = \phi(\mathcal{X}^N, \nu^N, c^{\ell})$. }  

\begin{thm}[Propagation of chaos result for the conditional laws]
\label{tight}
{
In addition to Assumptions~\ref{xv}- \ref{asmp: lipfn}, 
assume the empirical measures share the same initial condition with the mean field counterpart.
Then, there exist $\nu^{\star}_t$ and $\mu^{\star}_t$ exist for $t \in [0,T]$ such that $\E[\mathcal{W}_2^2(\mu^{N{\star}}, \mu^{\star})] \rightarrow 0$ and $\E[\mathcal{W}_2^2(\nu^{N{\star}}, \nu^{\star})] \rightarrow 0$, as $N \to \infty$, where $\mathcal{W}_2$ distances are induced by the supnorm on $C([0,T]; \mathbb{R}_+)$ and $\mathcal{C}^{A}$, respectively.}
\end{thm}

{
See Appendix~\ref{appendix: proof5} for the proof. We can easily weaken the assumption of the same initial conditions as the initial empirical measure to its corresponding mean-field measure in the $W_2$ sense. 
The above results justify that the influence of each player on the whole system is diminishing as $N$ grows to infinity. Asymptotically, the pairwise interactions of every two players can be simplified as the representative player interacting with the market system through the distribution of the states and actions of the population. We can thus consider a representative player and solve a single optimization problem instead.

We now investigate whether the optimal strategy derived in Mean Field games can achieve Nash equilibrium in the $N$-player case. {Consider the solution $u^{\star \ell, N}$ of the Cauchy PDE in the finite player case, that is, \eqref{inequ}-\eqref{eqsol} but with coefficients of the finite-player system. In particular, the coefficients of the trading volume would be different in the finite and mean-field system as the trading volumes are $\frac{1}{N} \sum_{\ell=1}^N V^\ell(\cdot) \pi^{\ell}(\cdot)$ and $\E[V^\ell(\cdot) \pi^{\ell}(\cdot)|\mathcal{F}_t^B]$ and their coefficients are solved using \eqref{eq: nv} and \eqref{mcvsde2}, respectively.} We show below that the optimal arbitrage $u^{\star, \ell, N}$ 
strongly converges to optimal arbitrage value $u^\star$ in the mean field game setting \eqref{mfu}, as $N\to \infty$. Readers can find the details of the proof in Appendix~\ref{appendix: proof5}.

\begin{thm}
\label{netomfe}
Under Assumptions~\ref{xv}-\ref{asmp: lipfn}, if $c_\ell \equiv c$ for some $\ell= 1, \ldots, N$, then $u^{\star}(T-t, \mathbf{x}, \mathbf{z}) = \lim_{N \rightarrow \infty} u^{\star \ell, N}(T-t, \mathbf{x}, \mathbf{z})$ a.s., for initial condition $(\mathbf{x}, \mathbf{z}) \in \mathbb{R}_+^n \times \mathbb{R}^n_+$, $t \in [0,T]$.

\end{thm}

This result shows that the mean field game formulation here is indeed a valid approximation for the optimal arbitrage problem among a large (finite) number of players. 
In particular, the optimal arbitrage function in $N$-player game can be approximated by the optimal arbitrage function used in the mean field game.

\section{Discussions}
\label{sec: discussion}

{
In this work, we consider the competition among investors that pursue a relative arbitrage goal and derive the market as an infinite interacting particle system with common noise. We clarify the randomness introduced from the market, the preference parameter, and the initial market condition, which is not obvious when only looking at the game in a finite-player scenario. We show that the existence and uniqueness of mean field equilibrium following the functional generated portfolio based on stock capitalization and trading volume. The mean field equilibrium and its finite-player counterpart correspond to fixed point problems on the portfolio (path) space, which is solved by an associated Cauchy problem. The benefit of considering the infinite limit of the number of players lies in the simplification of the coupled coefficients in the market dynamics and Cauchy problem by considering a representative player.}

The discussion of a relative arbitrage problem in an interacted system is two-fold: On one hand, in the mean field game regime, every player needs to follow the goal of competing with both the market and other participants, and they all succeeded in adopting the Nash equilibrium strategy. This is our focus of this paper. On the other hand, conditions of the Fichera drift (\cite[Chapter 9]{friedman1975stochastic}) must be satisfied in order for the relative arbitrage opportunity to exist. See \cite{ichiba2020relative} for the Fichera drift condition of the finite player system. If some investor's decision does not satisfy the conditions of Fichera drift, they do not obtain the relative arbitrage opportunity, i.e., $u^{\ell} = e^{c_{\ell}}$). Note that whether the Fichera drift condition is satisfied or not does not affect the achievement of Nash equilibrium. {Designing numerical algorithms, potentially extending the approach in \cite{yang2024finding}, to solve the mean-field game that satisfies the Fichera drift condition remains an avenue for future work.}

A natural extension is to consider a different optimization concept, where a global planner decides on the optimal feedback strategy for all investors. This is a stochastic control problem with the underlying dynamics mean-field interactions. When $N \rightarrow \infty$, this is known as the mean field control problem. This type of problem can be regarded as the problem of a single player who tries to optimize a cost involving the law of her own state, which evolves with a conditional McKean-Vlasov (MKV) dynamics. 
Through the difference of the solution from MFC and MFG, we can be informed about the influence of deviation from other players, and the appropriate information structure to use under different scenario. 

\bigskip

\textbf{Acknowledgements. } We thank anonymous referees and associate editor for their careful reading and helpful suggestions.

\newpage

\begin{appendices}

\section{Proofs}
\label{sec:appendix}
\subsection{Mean Field Equilibrium}
\label{appendix:proofs}

\begin{lemma}[Gr\"onwall's inequality] \label{gronwall}
Let $h \in L^{\infty}([0,T])$ for some $t_0>0$. Assume that
there exist $a \ge 0$ and $b > 0$ such that
$h(t) \leq a + b \int_0^t h(s) \, {\rm d} s$ for all $t \in [0,T]$. 
Then it holds 
\[
h(t) \leq a \, e^{bt}, \qquad t \in [0,T].
\]
\end{lemma}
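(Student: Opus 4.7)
My plan is the classical integrating-factor argument. First I would define the auxiliary primitive
\[
H(t) := a + b \int_{0}^{t} h(s)\, ds, \qquad t \in [0,T].
\]
Because $h \in L^\infty([0,T])$, the function $H$ is absolutely continuous on $[0,T]$ with $H(0) = a$ and $H'(t) = b\, h(t)$ for almost every $t$. The hypothesis of the lemma is exactly $h(t) \leq H(t)$ for every $t \in [0,T]$, and feeding this bound into the derivative yields the differential inequality $H'(t) \leq b\, H(t)$ a.e.\ on $[0,T]$.

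Next I would multiply by the integrating factor $e^{-bt}$ to rewrite the inequality as $\bigl(e^{-bt} H(t)\bigr)' \leq 0$ almost everywhere. Since $t \mapsto e^{-bt} H(t)$ is itself absolutely continuous, this a.e.\ bound on its derivative forces it to be nonincreasing on $[0,T]$. In particular $e^{-bt} H(t) \leq H(0) = a$ for every $t \in [0,T]$, i.e.\ $H(t) \leq a\, e^{bt}$. Combined with the pointwise bound $h(t) \leq H(t)$, this gives $h(t) \leq a\, e^{bt}$ for every $t \in [0,T]$, which is the claim.

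Since the statement is an entirely classical inequality, I do not expect any substantive obstacle. The only mild subtlety is regularity: because $h$ is merely in $L^\infty$ rather than continuous, I rely on the absolute continuity of $H$ and on the fact that a nonpositive a.e.\ derivative implies monotonicity of an absolutely continuous function. Should that step be inconvenient, I would fall back on the iterative alternative, substituting the hypothesis into itself $n$ times to obtain
\[
h(t) \leq a \sum_{k=0}^{n-1} \frac{(bt)^k}{k!} + b^n \int_{0}^{t} \frac{(t-s)^{n-1}}{(n-1)!} h(s)\, ds,
\]
and then using $\|h\|_{L^\infty([0,T])} < \infty$ to let the remainder vanish as $n \to \infty$, recovering $h(t) \leq a\, e^{bt}$ from the Taylor series of the exponential.
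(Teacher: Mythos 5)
Your proof is correct. The paper states Lemma~\ref{gronwall} as a classical fact without providing any proof, so there is no argument to compare against; your integrating-factor derivation (with the correct attention to absolute continuity of $H$ and the fact that an a.e.\ nonpositive derivative of an absolutely continuous function implies monotonicity) is the standard and complete way to establish it, and the iterative fallback you sketch is equally valid. The only cosmetic point is that the hypothesis ``for some $t_0>0$'' in the statement is a leftover typo in the paper and plays no role, as you implicitly recognized by working directly on $[0,T]$.
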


\begin{proof}[Proof of Proposition~\ref{pistarr}]
To show $\mathcal{V}(t)$ is a wealth process generated by a strategy, we use \eqref{wealth} and \eqref{bigpi}  to get
\[
\frac{d\mathcal{V}(t)}{\mathcal{V}(t)} = \frac{1}{ \mathcal{V}(t)} \bigg(\delta dX(t) + (1-\delta) \sum_{i=1}^n \mathcal{Z}_i(t) \frac{dX_i(t)}{X_i(t)} \bigg) = \sum_{i=1}^n \Pi_i(t) \frac{dX_i(t)}{X_i(t)} , \quad \text{for}  \ t \in (0,T],
\]
and
\begin{equation} \label{eq:mathcalV(0)}
\mathcal{V}(0) = \delta X(0) + (1-\delta)\E[v], \quad 
\Pi_i(t)
= \frac{\delta X_i(t) + (1-\delta)\mathcal{Z}_i(t)}{\mathcal{V}(t)}.
\end{equation}
Further calculations show that $\Pi_i(t)$ satisfies the self-financing condition \eqref{admpi}. Therefore, we conclude $\Pi(t) \in \mathbb{A}$, since $\sum_{i=1}^N \Pi_i(t) = 1$ and $0 < \Pi_i(t) < 1$, for any $t \in [0,T]$, $i = 1, \ldots, n$.

\end{proof}

\begin{proof}[Proof of Theorem~\ref{mfeuniq}]
When searching for mean field equilibrium, we start from a given choice of $\pi(t) \in \mathbb{A}$, for any $t \in [0,T]$. 
Every player in the mean field game acts optimally by following 
\begin{equation}
\label{mfevstar}
    V^{\star}(t) = \mathcal{V}^{\star}(t) u(T-t),
\end{equation} 
with the rest of the players assumed to be fixed. Thus,
{the Markovian property \eqref{ggg} implies the deflated wealth process
\begin{equation}
    \label{mgVhat}
    \hat{V}^{\ell \star}(t) := V^{\ell \star}(t) L(t)
= \mathbb{E} \big[ \mathcal{V}(T) L(T)|\mathcal{F}^{MF}_t \big]
\end{equation}
is a martingale. As a result, from \eqref{mgVhat}, the $dt$ terms in $d\hat{V}^{\ell}(t) = d(\mathcal V^{\ell}(t) L(t) u^{\ell}(T-t))$ will vanish. In the same vein of the proof of \cite[Theorem 4.1]{ichiba2020relative}, by comparing its diffusion term with the general formula $\hat{V}^{\ell}$ in \eqref{vlhatt}, we derive that at each time $t$, 
\begin{equation}
\label{eq:bestresponse0}
\begin{aligned}
\Phi(\pi^{\ell})(t) & = \mathcal{X}(t) D_x \log u^{\ell, \pi}(T-t, \mathbf{x}, \mathbf{z})+ \mathcal Z(t) D_{z}  \log u^{\ell, \pi}(T-t, \mathbf{x}, \mathbf{z})\\
& \ \ +\frac{\delta X(t)}{\mathcal{V}(t)} \mathbf{m}(t) + \frac{(1-\delta)}{ \mathcal{V}(t)} \E[V(t) \pi(t)|\mathcal{F}^B_t].
\end{aligned}    
\end{equation}
Hence, \eqref{eq:bestresponse} holds.
}

\end{proof}

\begin{proof}[Proof of Theorem~\ref{thm:unique}]

{We consider the optimal arbitrage quantity on the path space. 
When the optimal strategy is the fixed point solution of \eqref{anotherpi2}, we can find a corresponding {fixed point operator $\mathfrak G$ such that} $\mathfrak G(U) = U$, where $U := \left(U(t)\right)_{t \in [0,T]} \in \mathcal{U} 
$. In particular,
\begin{equation}
\label{ufixed1}
\mathfrak G (U)(T-t, \mathbf{x}, \mathbf{z}) = U(T-t,\mathbf{x}, \mathbf{z})
\end{equation}
for every $t \in [0, T]$, $(\mathbf{x}, \mathbf{z}) \in \mathbb{R}_+ \times \mathbb{R}_+$. }

\begin{enumerate}

\item 
 We first set up the fixed-point problem of $u(\cdot)$. We have
\begin{equation}
\label{ute2}
    u(T-t,\mathbf{x}, \mathbf{z}) = e^{c} \mathbb{E}^{\mathbf{x}, \mathbf{z}} \big[ \mathcal{V}(T-t) L(T-t) \big]\, /\, \mathcal{V}(0) = \frac{\delta e^{c}}{\mathcal{V}(0)} \E^{\mathbf{x}, \mathbf{z}} \left[ \frac{\widehat{X}(T-t)}{1- \frac{1-\delta}{N} \E[e^{c}U(t) | \mathcal{F}^B_t]} \right].
\end{equation}
Define $f : \mathcal{U} 
\rightarrow C([0,T]; \mathbb R)$ as follows:
\begin{equation}
\label{fu}
f(u)(t) = \left[1- (1-\delta) \E [ u(t) e^{c} | \mathcal{F}^B]\right]^{-1}, \quad u \in \mathcal{U}, \, t \in [0,T].
\end{equation}
Then, it holds, 
\[
\begin{aligned}
\sup_{\tau \in [0,T], (\mathbf{x}, \mathbf{z}) \in \mathbb{R}^n_+\times \mathbb{R}^n_+} f(u)(\tau,\mathbf{x}, \mathbf{z}) 
& \leq \sup_{\tau \in [0,T]} \left[1- (1-\delta) \E[e^{c}| \mathcal{F}_\tau^B] \right]^{-1}  \\
& \leq \left[1- (1-\delta) E[e^{c}] \right]^{-1} =: C_f.    
\end{aligned}
\]
Define an operator $\mathcal{F}: \mathcal{U} \rightarrow \mathcal{U}$ by
\begin{equation}
 \label{ufixed0}
 \begin{aligned}
\left[ \mathcal{F} * U \right](T-t,\mathbf{x}, \mathbf{z})&= \mathcal{V}^{-1}(0)\E^{\mathbf{x}, \mathbf{z}} \left[ \frac{\widehat{X}(T-t)}{1- (1-\delta)  \E[ e^{c} U(t,\mathbf{x}, \mathbf{z}) | \mathcal{F}^B]} \bigg| \mathcal{F}^{MF}_t \right] \\
&= \mathcal{V}^{-1}(0)\E^{\mathbf{x}, \mathbf{z}} \left[ \widehat{X}(T-t) f (U)(t,\mathbf{x}, \mathbf{z}) \bigg| \mathcal{F}^{MF}_t \right],
 \end{aligned}
\end{equation}
for every $t \in [0,T]$, where $U := \left(U(t)\right)_{t \in [0,T]} \in \mathcal{U}$, $f$ is defined in \eqref{fu}. Consider a mapping $\mathcal{I}: \mathcal{U} \rightarrow \mathcal{U}$, such that $\mathcal I(U)(t, \mathbf{x}, \mathbf{z}) = U(T-t, \mathbf{x}, \mathbf{z})$, for every $t \in [0,T]$. In particular, $\mathcal{I}(\cdot)$ maps the optimal arbitrage quantity for the time horizon $[0,T-t]$ to the optimal arbitrage quantity for the time horizon $[0, t]$. Note that this is different from the subproblems in \eqref{utobj1}, since the time horizon there is $[t,T]$, for every $t \in [0,T]$. We have
\[
\mathcal I(\E^{\mathbf{x}, \mathbf{z}}[\widehat{X}])(T-t,\mathbf{x}, \mathbf{z}) = \E^{\mathbf{x}, \mathbf{z}}[\widehat{X}(t)].
\]
This mapping $\Phi$ is continuous and bounded in $\mathcal{U}$. The boundedness is immediately followed by the bounded nature of $U(\cdot)$. Let $U_m(\cdot)$ be a sequence of functions in $\mathcal{U}$ that converges uniformly to a function $U \in \mathcal{U}$ on $[0,T] \times \mathbb{R}_+ \times \mathbb{R}_+$ as $m \to \infty$. Then for each $t \in [0,T]$,
    \[\lim_{m \to \infty} \mathcal I(U_m)(t) = \lim_{m \to \infty} U_m(T-t) = U(T-t) = \mathcal I(U)(t)\]
    Thus, $\mathcal I$ is continuous and hence, $ \mathcal F*\mathcal I(U)(t) = \mathcal F(U)(T-t)$ for $t \in [0, T]$, $\mathfrak G$ is continuous in the topology of $\mathcal U$. At Nash equilibrium, \eqref{ute2} leads to the fixed-point condition
\begin{equation}
\label{ufixed}
\mathfrak G (U)(T-t, \mathbf{x}, \mathbf{z}) = \left[ \mathcal{F} * \mathcal{I}(U) \right](T-t,\mathbf{x}, \mathbf{z}) = U(T-t,\mathbf{x}, \mathbf{z})
\end{equation}
for every $t \in [0, T]$, $(\mathbf{x}, \mathbf{z}) \in \mathbb{R}_+ \times \mathbb{R}_+$.

\item Denote $\overline{M} :=  M C_f \left(x_0 + \frac{1-\delta}{\delta} \overline{v}_0\right)^{-1} = \frac{M}{\delta x_0}$. 
By triangle's inequality, for $u, v \in \mathcal{U}$,
\[
\begin{aligned}
 \|\mathcal{F} * \mathcal{I}(u)  - \mathcal{F} * \mathcal{I}(v) \|_{\mathcal U} & = \sup_{\tau \in [0,T], (\mathbf{x}, \mathbf{z}) \in \mathbb{R}^n_+\times \mathbb{R}^n_+} \!\!\! \! |\mathcal{F} * \mathcal I(u(\tau,\mathbf{x}, \mathbf{z})) - \mathcal{F} * \mathcal{I}(v(\tau,\mathbf{x}, \mathbf{z}))| \\
 &\leq \sup_{\tau, \mathbf{x}, \mathbf{z}} \E^{\mathbf{x}, \mathbf{z}} \left[\widehat{X}^u(\tau) \left | f(u) - f(v)\right | (\tau,\mathbf{x}, \mathbf{z})\right] / \mathcal{V}(0) \\
 & \quad + \widehat{M} \sup_{\tau, \mathbf{x}, \mathbf{z}} \E \left |\widehat{X}^u(\tau) - \widehat{X}^v(\tau) \right|\\
& \leq L \sup_{\tau, \mathbf{x}, \mathbf{z}} \E^{\mathbf{x}, \mathbf{z}} \left[\widehat{X}^u(\tau) |u - v|(\tau,\mathbf{x}, \mathbf{z}) \right] / \mathcal{V}(0) + \widehat{M} \|u - v\|_{\mathcal U}\\
& \leq \left( \lambda L + \widehat{M} \right) \|u - v\|_{\mathcal U},
\end{aligned}
\]
where the second inequality is derived from the local Lipschitz continuity of $f$. $L := \sup_{u \in [0,1]} |f'(u)| = (1-\delta) \overline{e^c} / (1-(1-\delta) \overline{e^{c}})^2 $
and
\begin{equation}
\label{eq:lambda}
\begin{split}
    \lambda &:= \sup_{\tau, \mathbf{x}, \mathbf{z}}  \E^{\mathbf{x}, \mathbf{z}} \left[\widehat{X}(\tau) \right]/\mathcal{V}(0) = \sup_{\tau, \mathbf{x}, \mathbf{z}}  \frac{\E^{\mathbf{x}, \mathbf{z}} \left[\widehat{X}(\tau) \right]}{X(0)} \frac{X(0)}{\mathcal{V}(0)} \\
    & < \frac{X(0)}{\mathcal{V}(0)} = \frac{1-(1-\delta)\overline{e^{c}}}{\delta }.
    \end{split}
\end{equation}
The inequality in \eqref{eq:lambda} follows from the supermartingale property $\E \left[\widehat{X}(t) \right] < x_0$ in Proposition~\ref{f1}.


Combining these quantities, we get
\begin{equation}
\label{lambdal}
\begin{aligned}
    \lambda L & \leq  \frac{1}{ \delta } \frac{(1-\delta) \overline{e^{c}}}{1-(1-\delta)\overline{e^{c}}}. \\
\end{aligned}    
\end{equation}
\item To show the contraction property of the operator $ \mathcal{F} * \mathcal{I}$ on $ \mathcal{U} := C([0,T] \times \mathbb{R}_+ \times \mathbb{R}_+))$, we need $\lambda L + \widehat{M} < 1$ to hold. This is equivalent to the following conditions 
\[
 \overline{e^{c}} (1-\delta)< 1 , 
\]
and
\begin{equation}
\label{eq: cond_M}
    \frac{1}{\delta} \frac{(1-\delta) \overline{e^{c}}}{1-(1-\delta)\overline{e^{c}}} + \frac{M}{\delta x_0}< 1.
\end{equation}


Then by the Banach fixed point theorem, under \eqref{eq: condforn}, the solution $u$ in \eqref{ufixed} is unique. Thus, \eqref{eq: condforn} is a sufficient condition that every participant achieves the unique Nash equilibrium in the sense of Definition~\ref{mfemu}. 
\end{enumerate}
\end{proof}

\subsection{General interacting particle systems}
\label{appendix: nusystem}

In the following, $|\cdot|$ denotes the Euclidean norm of vector $\mathbb{R}^d$ and the Frobenius norm of matrix $\mathbb{R}^{d\times n}$, $d =1$ or $n$ in particular. 
\begin{asmp}
\label{asmp: lipfn}
Assume the Lipschitz continuity and linear growth condition are satisfied with Borel measurable mappings $b_{i}(x, \nu)$, $s_{ik}(x, \nu)$ in \eqref{mcvsde}-\eqref{mcvsde2} from $ \mathbb{R}^n_+ \times \mathcal{P}_2(\mathbb{R}_+)$ to $\mathbb{R}^n$. That is, there exist some constants $C_L \in (0, \infty)$ and $C_G \in (0, \infty)$, such that
\begin{equation}
\label{bslip2}
    |b(x, \nu) - b(\widetilde{x}, \widetilde{\nu})| + |s(x, \nu) - s(\widetilde{x}, \widetilde{\nu})| \leq C_L[|x - \widetilde{x}| + \mathcal{W}_2(\nu, \widetilde{\nu})] , 
\end{equation}
\[
|x \beta(x, \nu)| + |x \sigma(x, \nu)| \leq C_G(1+|x|+M_2(\nu)),
\]
where 
\[
M_2(\nu) := \bigg(\int_{C([0,T], \mathbb{R}_+ \times \mathbb{A})} |x|^2 d \nu(x) \bigg)^{1/2}; \quad \nu \in \mathcal P_{2}(\mathbb R_{+} \times \mathbb A) . 
\]

Also, assume that there exists a positive constant $M_v$ such that   
for any $(x, \nu)$ in $ \mathbb{R}^n_+ \times \mathcal{P}_2(\mathbb{R}_+)$ to $\mathbb{R}^n$,
\[
|v^{} \beta( x, \nu)| + |v^{} \sigma( x, \nu)| \leq M_v,
\]
\[
|v^{} \beta( x, \nu)| + |v\sigma(x, \nu)| \leq  C_G(1+|x|+|v|+M_2(\nu)),
\]
for every $v, \widetilde{v} \in \mathbb{R}_+$, $x, \widetilde{x} \in \mathbb R_{+}^n$; $\nu, \widetilde{\nu} \in \mathcal P_{2}( \mathbb{R}_+)$. {
We consider the optimal strategy to be of the form $\phi(x, v, \nu,c)$ and assume that there exists
$L_\phi>0$ such that for all $(x, v, \nu),(x', v', \nu')$ and all $c$,
\[
  \|\phi(x, v, \nu,c)-\phi(x', v', \nu',c)\|
  \le L_\phi\big( |x-x'| + |v-v'| + \mathcal W_2(\nu,\nu') \big).
\]
}
\end{asmp}

Analogous to \cite[Theorem C.1]{ichiba2020relative}, we can derive the following result.
\begin{thm}
\label{eumckean}


Assume that initial capitalization and wealth $(\mathbf{x}_0, v_0)$ is independent of the Brownian motion $B (\cdot)$ on $(\Omega, \mathcal F, \mathbb F, \mathbb P)$, and follows $\E[\sup_{0 \leq t \leq T} \|(\mathbf{x}_0, v_0)\|^2] \leq \infty$.
Under Assumptions~\ref{asmp: lipfn}, the McKean-Vlasov system \eqref{mcvsde}-\eqref{mcvsde2} admits a unique strong solution.
\end{thm}

\subsection{Approximation and convergence results}
\label{appendix: proof5}

\begin{proof}[Proof of Proposition~\ref{xlimit}]
First, we show that 
$\mathbb{P} \circ (\mathcal{X}^N)^{-1}$ is tight. 
A sequence of measures $\mathbb{P} \circ (\mathcal{X}^N)^{-1}$ on $\mathcal{P}_2(C([0,T]; \mathbb{R}_+))$ is tight if and only if both conditions are satisfied.
\begin{itemize}
    \item $\lim_{N \rightarrow \infty} \mathbb{P}^{\mathcal{X}^N}(|\mathcal{X}^N(0)| \geq a) = 0;$
    \item $\lim_{\delta \rightarrow 0} \limsup_{N \rightarrow \infty} \mathbb{P}^{\mathcal{X}^N}(\{\sup_{|s-t|\leq \delta}|\mathcal{X}^N(s) - \mathcal{X}^N(t)| > \epsilon\}) = 0.$
\end{itemize}
By \cite[Theorem 4.3]{ikeda2014stochastic}, it is sufficient to show that
\begin{enumerate}[label=(\roman*)]
    \item \label{tightcond1} there exist positive constants $M_x$ and $\gamma$ such that $\E\{|\mathbf{x}^N|^{\gamma}\} \leq M_x$ for every $N=1,2, \ldots$,
    \item \label{tightcond2} there exist positive constants $M_k$ and $\delta_1$, $\delta_2$ such that $\E\{|\mathcal{X}^N(t) - \mathcal{X}^N(s)|^{\delta_1}\} \leq M_k |t-s|^{1+\delta_2}$ for every $N$, $t,s \in [0,k]$, $k = 1, 2, \ldots$.
\end{enumerate}

With $\mathbf{x} \in L^2(\Omega, \mathcal{F}_0, \mathbb{P}; \mathbb{R}^n_+)$ in Assumption~\ref{xv}, condition \ref{tightcond1} in the above statement holds.

To check condition \ref{tightcond2}, by Cauchy–Schwarz inequality,
\[
\begin{aligned}
|\mathcal{X}^N(t) - \mathcal{X}^N(s)|^{2} 
 = & |X^N_1(t) - X^N_1(s)|^{2} + \ldots + |X^N_n(t) - X^N_n(s)|^{2} \\
 = & \sum_{i=1}^n \Big| \int_s^t X^N_i(r) \beta_i(r)dr + \sum_{k=1}^n \int_s^t X^N_i(r) \sigma_{ik}(r)dW_k(r) \Big|^{2} \\
\end{aligned}
\]
Then, with the Lipschitz and linear growth conditions, we first derive that the drift term satisfies
\begin{align*}
    \E \left[ \left| \int_s^t b_i( \mathcal X^N(u), \nu^N_u) du \right|^{\delta_1} \right] &\leq (t-s)^{\delta_1} C_G^{\delta_1} \sup_{t \in [0,T]} (\E[|\mathcal X^N(t)| + M_2(\nu^N_t)] + 1)^{\delta_1}.
\end{align*}
For the diffusion term, by It\^o's isometry, it holds
\begin{align*}
    & \E \left[ \left| \int_s^t s_i( \mathcal X^N(u), \nu^N_u) dW^N(u) \right|^{\delta_1} \right] \\
    &\leq (t-s)^{\frac{\delta_1}{2}} C_G^{\delta_1} \sup_{t \in [0,T]} (\E[|\mathcal X^N(t)| + M_2(\nu^N_t)] + 1)^{\delta_1}.
\end{align*}
Combining the bounds for the drift and diffusion terms, we have:
\begin{align*}
\E\{|\mathcal{X}^N(t) - \mathcal{X}^N(s)|^{\delta_1}\} &\leq \sum_{i=1}^n  \E \left[ \left| \int_s^t b_i(\mathcal X^N(u), \nu^N_u) du \right|^{\delta_1} \right]  \\
& \qquad {} + \sum_{i=1}^n  \E \left[ \left| \int_s^t s_i(\mathcal X^N(u), \nu^N_u) dW^N(u) \right|^{\delta_1} \right] \\
&\leq C_N |t-s|^{1 + \delta_2}
\end{align*}
where $\delta_2 = \frac{\delta_1}{2} - 1$, $\delta_1 > 2$, and
\[ C_N = n C_G^{\delta_1} \E[|\mathcal X^N(t)| + M_2(\nu^N_t) +1]^{\delta_1}. \]
Thus, condition \ref{tightcond2} follows.

By Prokhorov theorem \cite{billingsley2013convergence} {and the uniformly bounded second moments of $ \{\mathcal X^{N,\ell}\}_{N \in \mathbb{N}}$, }tightness implies relative compactness in $\mathcal P_2(C([0,T];\mathbb R_+^n))$, which means here that each subsequence of $\{ \mathcal{X}^N \}$ contains a further subsequence converging weakly on the space $C([0,T]; \mathbb{R}^n_+)$. As a result, a subsequence exists such that $\mathcal{X}(t) = \lim_{N \rightarrow \infty} \mathcal{X}^N(t)$ a.s. 
The limit of $\{\mathbb{P} \circ (\mathcal{X}^N)^{-1}\}$ is unique and hence $\{\mathbb{P} \circ (\mathcal{X}^N)^{-1}\}$ converges weakly to $\{\mathbb{P} \circ (\mathcal{X})^{-1}\}$ as $N \rightarrow \infty$. 
\end{proof}

\begin{proof}[Proof of Proposition~\ref{prop: tightv}]
By \eqref{eq: nv} with $V^{N, \ell}(0) = v^{\ell}$, it follows that
\begin{equation}
\begin{split}
 |V^{N, \ell}(t)|^2 & \leq 3 \left(v^{\ell}\right)^2 + \underbrace{\, 3 \left( \int_0^t V^{N, \ell}(s) \sum_{i=1}^{n}\pi^{\ell}_i(s) \beta_{i}(\mathcal X(s), \nu_s)ds \right)^2}_{(i)} \\
 & \quad {} + \underbrace{\, 3 \left( \int_0^t V^{N, \ell}(s) \sum_{i=1}^{n} \sum_{k=1}^{n} \pi^{\ell}_i(s) \sigma_{ik}(\mathcal X(s), \nu_s)dW^N_{k}(s) \right)^2}_{(ii)}.
\end{split}
\end{equation}
By the linear growth condition in Assumption~\ref{xv}, 
\begin{align*}
\E[(i)] & \leq 3 n t \sum_{i=1}^{n} \E \int_0^t \left( V^{N, \ell}(s)  \pi_i^{\ell}(s) \beta_{i}(\mathcal X(s), \nu_s)\right)^2 ds \\
& \leq 3 C_G n t \E \int_0^t  \left( V^{N,\ell}(s) \right)^2 + |\mathcal X^N(s)|^2 + M_2^2(\nu^N_s) +1 ds \\
& \leq 3 C_G n t \left( \E \left( \sup_{0 \leq s \leq t}|\mathcal X^N(s)|^2 + \sup_{0 \leq s \leq t} M_2^2(\nu^N_s) +1\right) + \E \int_0^t  \left( V^{N,\ell}(s) \right)^2 ds \right).
\end{align*}
Next, by It\^o's isometry,
\begin{align*}
\E[(ii)] & \leq 3 n \sum_{i=1}^{n} \E \int_0^t  \left( V^{N, \ell}(s) \sum_{k=1}^{n} \pi^{\ell}_i(s) \sigma_{ik}(s, \mathcal X(s), \mu_s) \right)^2 ds\\
& \leq 3 C_G n \left( \E \left( \sup_{0 \leq s \leq t}|\mathcal X^N(s)|^2 + \sup_{0 \leq s \leq t} M_2^2(\nu^N_s) +1\right) + \E \int_0^t  \left( V^{N,\ell}(s) \right)^2 ds \right).
\end{align*}

By Gr\"onwall's inequality, we can derive the boundedness of the moment of $V^{N,\ell}(t)$ that for any fixed $ t \in [0,T]$,
\[ \E |V^{N,\ell}(t)|^2 \leq 3 \left( (v^{\ell})^2 + C_G n t \sup_{0 \leq s \leq t} \E \left(|\mathcal X^N(s)|^2 + M_2^2(\nu^N_s) +1\right) \right) \exp C_t, \]
where $C_t = 3 C_G n t (t+1).$

Next, we derive the equicontinuity. For every $\epsilon$, we consider $|t-s|<\delta$, given $\delta >0$, $t, s \in [0,T]$. By using Ito's formula and Lipschitz conditions, it follows that
\begin{align*}
\E \left[ |V^{N,\ell}(t) - V^{N,\ell}(s)|^2 \right] &\leq 2\delta C_G^2 \sup_{u \in [s,t]}  \E\left[ \left(|\mathcal X^N(u)| + M_2(\nu^N_u) + 1\right)^2 \right]
\end{align*}
By selecting $ \delta $ sufficiently small, we can ensure the equicontinuity from Chebyshev's inequality.
Now we can use Arzel\`a-Ascoli theorem to conclude that the family of functions $\{ V^{N,\ell}\}_{N \in \mathbb{N}} $ for each $\ell$ is equicontinuous and uniformly bounded, then it is pre-compact in the space of continuous functions. This means every sequence has a uniformly convergent subsequence. Similar to \cite[Vol II, Lemma 3.16]{carmona2015probabilistic}, we can show the tightness of $\mu^N$. Given $ \epsilon > 0 $, by the tightness of the law of $ \{V^{N,\ell}\}_{N \in \mathbb{N}}$, there exists a sequence of compact sets $\{K_p\}_{p \in \mathbb{N}}$, where $ K_p \subset \mathbb R_+ $, such that
\[ \sup_{N \in \mathbb{N}} P(V^{N,\ell} \notin K_p) \leq \frac{\epsilon}{2^p}, \quad \text{for each } \ell = 1, \ldots, N, \quad p \in \mathbb{N}. \]
Consider the set $ B = \mathbb R_+ \setminus \cup_{p=1}^{\infty} K_p$. Thus
\begin{align*}
\E[\mu^N(B)] = \E\left[ \frac{1}{N} \sum_{k=1}^{N} 1_{\{V^{N,\ell} \in B\}} \right] = \frac{1}{N} \sum_{k=1}^{N} P(V^{N,\ell} \notin \cup_{p=1}^{\infty} K_p) \leq \frac{1}{N} \sum_{\ell=1}^N \sum_{p=1}^{\infty} \frac{\epsilon}{2^p} = \epsilon.
\end{align*}

{Thus, by the uniformly bounded second moments of $ \{V^{N,\ell}\}_{N \in \mathbb{N}}$}, $\mu_N$ is tight on $\mathcal{P}_2(C([0,T]; \mathbb{R}_+))$ equipped with the weak topology. Since $\pi \in \mathbb A$ is a compact space, we can show similarly the tightness of $\nu_N$ on $\mathcal{P}_2(C([0,T]; \mathbb{R}_+))$ equipped with the Wasserstein distance.

\end{proof}

\begin{proof}[Proof of Theorem~\ref{tight}]
Let the wealth process $(\Bar{V}^{\ell})$ be 
the solution of \eqref{mcvsde2} with closed loop Markovian strategy $\phi : [0,T] \times \mathcal{P}_2(\mathbb{R}_+) \times \mathbb{R} \rightarrow \mathbb{A}$ is a bounded function such that at Nash equilibrium, the strategy $\pi^{\star}(t) = \phi(\mathcal{X}(t), \nu_t, c)$. We can define the finite-player counterparts $(V^{\ell,N}, \pi^{\ell,N})$, where $\pi^{\ell,N} = \phi(\mathcal{X}^N(t), \nu^N_t, c^{\ell})$. 
For $\ell = 1, \ldots, N$, due to the absence of independent noise, the randomness of $(\Bar{V}^{\ell}, \pi^{\ell \star})$, comes all from the initial condition if conditioned on the common noise $B$, where $\pi^{\ell \star} := \phi(\mathcal{X}(t), \nu_t, c^{\ell})$.
Each investor takes an i.i.d sample of $c_{\ell}$ in \eqref{2}, so $(\Bar{V}^{\ell})$ is conditionally i.i.d. given common noises $B$. 
In the same vein, \eqref{anotherpi2} indicates that optimal strategies at equilibrium are conditionally i.i.d. given common noises $B$. Thus, $(U^1, \pi^1)$ from a representative player is exchangeable from other players, given $\mathbb{F}^{B}$. Without loss of generality, we can discuss the problem with a representative player. 

By Theorem~\ref{thm:unique}, at Nash equilibrium, the optimal arbitrage quantity $u^{\star}$ is uniquely determined. Thus, $\Bar{V}^{\ell}(0) = v^{\ell} = u^{\star}(T) \mathcal{V}(0) \sim \mu^\star_0,$ where $\mathcal{V}(0)$ follows \eqref{v0v0}. $\hat V^{N, \ell}(0) = v^{N, \ell}$ is a random sample from $\mu^\star_0$ for each $\ell$.  For $t > 0$,
\begin{equation}
\label{ul1}
d \Bar{V}^{\ell}(t) = \Bar{V}^{\ell}(t) \phi(\mathcal{X}(t), \nu_t, c^{\ell}) \beta(X(t), \nu_t)dt + \Bar{V}^{\ell}(t) \phi(\mathcal{X}(t), \nu_t, c^{\ell}) \sigma(\mathcal{X}(t), \nu_t)dB_t, \\
\end{equation}
\begin{equation}
\label{vl1}
\begin{aligned}
    d \hat V^{N, \ell}(t) = \hat V^{N, \ell}(t) \phi(\mathcal{X}^N(t), \nu^N_t, & c^{\ell}) \beta(\mathcal{X}^N(t), \nu^N_t)dt \\
    & + \hat V^{N, \ell}(t) \phi(\mathcal{X}^N(t), \nu^N_t, c^{\ell}) \sigma(\mathcal{X}^N(t), \nu^N_t)dW^N_t.
\end{aligned}
\end{equation}
Let us denote the supremum norm $\|x\|_t = \sup_{s \in [0,t]} |x_s|$. We use a similar approach as \cite[Theorem 3.3]{lacker2018mean} to define a truncated Wasserstein distance on $\mathcal{P}_2(\mathcal{C}^A)$ by
\[
d_t^2(\nu, \nu') := \inf_{\pi \in \Pi(\nu, \nu')} \int_{\mathcal{C}^A \times \mathcal{C}^A} \|x-x'\|_t^2 \pi(dx, dx').
\]
Consider $\mathcal G_t := \mathcal F_t^B \vee \mathcal \sigma({W^N_s: s \leq t, N \in \mathbb{N}})$. We have
\begin{equation}
\label{1}
\begin{aligned}
        & \E[\|\hat V^{N, \ell}(\mathcal{X}^N, \nu^N) - \Bar{V}^{\ell}(\mathcal{X}, \nu)\|_t^2 | \mathcal G_t] \\
= &\, \E \left[\int_0^t  \|\hat{V}^{N, \ell}(\mathcal{X}^N_r, \nu^N_r)\phi(\mathcal{X}^N(t), \nu^N_t, c^{\ell}) \beta(\mathcal{X}^N(t), \nu^N_t)- \Bar{V}^{\ell}(\mathcal{X}(r), \nu_r) \phi(\mathcal{X}(t), \nu_t, c^{\ell}) \beta(X(t), \nu_t\|^2 dr \Big| \mathcal G_t\right]\\
\leq &\, \E \left[C_2 \int_0^t  \|\mathcal{X}^N_r - \mathcal{X}_r\|^2 dr + C_2 \int_0^t d_r^2(\nu^N, \nu) dr 
        + n C_2 \int_0^t \|\hat V^{N, \ell}_r - \Bar{V}^{\ell}_r\|^2 dr\Big| \mathcal G_t\right]
        \\
        \leq & \, 
        C_F  \E \left[\int_0^t d_r^2(\nu^N, \nu) dr\Big| \mathcal G_t\right],
\end{aligned}
\end{equation}
where the first equation is a consequence of It\^o's isometry and the martingale property, the first inequality is a consequence of It\^o's isometry, the Lipschitz continuity of the coefficients, and the boundedness of strategies; the second inequality is derived by applications of Gr\"onwall's Lemma~\ref{gronwall} on the first difference term about $\mathcal{X}^N, \mathcal{X}$ and third difference term about $\hat V^{N, \ell}, \Bar{V}^{\ell}$ in the first inequality. $C_G$ is the constant determined by the Gr\"onwall's Lemma~\ref{gronwall}.
We follow the coupling arguments in \cite{carmona2018probabilistic}, the empirical measure of $(\hat V^{N, \ell}, \Bar{V}^{\ell})$ is a coupling of the empirical measure of $\mu^N$ and $\Tilde{\mu}^N$, where $\Tilde{\mu}^N$ is the empirical measure of $N$ i.i.d. samples $\Bar{V}^{\ell}$ in \eqref{ul1}. Thus
\begin{equation}
\label{2}
    d_t^2(\mu^N, \Tilde{\mu}^N) \leq \frac{1}{N} \sum_{\ell=1}^N \|\hat V^{N, \ell} - \Bar{V}^{\ell}\|^2_t, \quad \text{a.s.}.
\end{equation}

By the triangle inequality and \eqref{1}-\eqref{2},
\[
\begin{aligned}
    \E[d_t^2(\mu^N, \mu)] &\leq 2 \E[d_t^2(\Tilde{\mu}^N, \mu)] + 2C_F\E[\int_0^t d_r^2(\nu^N, \nu) dr]\\
    &\leq 2 \E[d_t^2(\Tilde{\mu}^N, \mu)] + 2\widetilde{C}_F\E[\int_0^t d_r^2(\mu^N, \mu) dr].
\end{aligned}
\]
By Gr\"onwall's inequality and set $t=T$, under the initial conditions of capitalization and preferences, it follows
\begin{equation}
\label{ineqc}
    \E[\mathcal{W}_2^2(\mu^N, \mu)] \leq 2e^{2\widetilde{C}_F T} \E[\mathcal{W}_2^2(\Tilde{\mu}^N, \mu)].
\end{equation}


Since in the empirical measure $\Tilde{\mu}^N$, $\Bar{V}^{\ell}$ is i.i.d given the common filtration $\mathcal{F}^{B}$, thus under the exchangeability of $\Bar{V}^{\ell}$, {we use conditional law of large numbers (See \cite[Theorem 3.5]{majerek2005conditional}) to get that there exists $\Bar{V}$ such that
\[
\lim_{N \rightarrow \infty} \frac{1}{N} \sum_{\ell=1}^N f(\Bar{V}^{\ell}) = \E[f(\Bar{V})|\mathcal F^B], \quad \text{a.s., \ for every} f \in C_b \left(\mathbb{R}^n \right).
\]
}
We then apply \cite[Theorem 6.6]{parthasarathy2005probability}, which gives that on a separable metric space, $\Tilde{\mu}^N \rightarrow \mu$ weakly,
\[
\lim_{N \rightarrow \infty} \int_{\mathbb{R}^N} d(x,x_0)^2 \Tilde{\mu}^N(dx) = \int_{\mathbb{R}^N} d(x,x_0)^2 \mu(dx) \quad \text{a.s.}
\]

Therefore, $\E[\mathcal{W}_2^2(\Tilde{\mu}^N, \mu)] \rightarrow 0.$ By \eqref{ineqc}, it leads to $\E[\mathcal{W}_2^2(\mu^N, \mu)] \rightarrow 0$. We can use similar methods to derive $\E[\mathcal{W}_2^2(\nu^N, \nu)] \rightarrow 0$ under Assumption~\ref{xv}, by using the boundedness of $\pi(\cdot)$, and Gr\"onwall's inequality (Lemma~\ref{gronwall}).
\end{proof}

\begin{proof}[Proof of Theorem~\ref{netomfe}]
The first step is to construct the fixed point operator $\mathfrak G^N$ of the $N$-player game analogously to the mean field game setup. We can then show that the fixed point operator $\mathfrak G^N : \mathcal U \rightarrow \mathcal U$ in \eqref{ufixedN} converge uniformly to $\mathfrak G: \mathcal U \rightarrow \mathcal U$ in \eqref{ufixed1}.

\begin{enumerate}

\item \textbf{Weak convergence of the optimal SDE systems.}

{By Proposition~\ref{xlimit}-\ref{prop: tightv}, $\mathbb{P} \circ (\mathcal{X}^{N}, V^{N}, W^N)$ is tight on the space $C([0,T]; \mathbb{R}_+^n) \times C([0,T]; \mathbb{R}_+^N) \times \mathcal{P}_2(C([0,T], \mathbb{R}_+ \times \mathbb{A})) \times C([0,T]; \mathbb{R}^n)$. Consider the optimal strategy at equilibrium, by Prokhorov's theorem, every sequence $\mathbb{P} \circ(\mathcal X^{N \star}, V^{N \star}, \mu^{N \star}, \nu^{N \star})^{-1}$ admits a weakly convergent subsequence $\mathbb{P} \circ(\mathcal X^{N_k}, V^{N_k}, \mu^{N_k}, \nu^{N_k})^{-1}$. By the propagation of chaos result Theorem~\ref{tight}, there exists the limit $\mathbb{P} \circ(\hat{\mathcal X}, \hat{V}, \mu^{\star}, \nu^{\star})^{-1}$ of the convergent subsequence, and it solves a martingale problem, where the limiting martingale problem coincides with that of \eqref{mcvsde}-\eqref{mcvsde2}, evaluated at $(\hat{\mathcal X}, \hat{V}, \mu^{\star}, \nu^{\star})$. 
Since the solution of \eqref{mcvsde}-\eqref{mcvsde2} is unique in law, we conclude that $\mathbb{P} \circ (\mathcal{X}^{N \star}, V^{N \star}, W^N)$ converges weakly to $\mathbb{P} \circ (\mathcal{X}^{\star}, V^{\star}, B)$. }

    \item \textbf{Construction of $\mathfrak G^N$.} Next, we construct the fixed point problem that solves the aforementioned $(\mathcal{X}^{N \star}, V^{N \star})$. We set $\mathfrak G^N : \mathcal U \rightarrow \mathcal U$, in analogy to $\mathfrak G$ in Theorem~\ref{thm:unique}. Since in $N$-player game, $u^{\ell}(T-t,\mathbf{x}, \mathbf{z})  = e^{c_{\ell}} U^N(T-t,\mathbf{x}, \mathbf{z})$, and
\begin{equation}
\label{ute2N}
    \begin{split}
    U^N(T-t,\mathbf{x}, \mathbf{z}) 
    & = \frac{\delta}{\mathcal{V}(0)} \E \left[ \frac{\widehat{X}^{N \star}(T-t)}{1- \frac{1-\delta}{N}  U^N(t) \sum_{k=1}^N e^{c_{k}}} \Big | \mathcal{F}_t^{MF}\right].
    \end{split}
\end{equation}
Investor $\ell$ looks for the best response by fixing the states of the other $N-1$ players when looking for Nash equilibrium. Thus, for simplicity, 
we denote $U^N(T-t,\mathbf{x}, \mathbf{z})$ as $U^{N \star}_{T-t}$, and the fixed relative arbitrage quantity for the other players at time $t \in [0,T]$ as $\widetilde{U}^{\star}(t) := U^{\star}(t) \sum_{k \neq \ell} e^{c_{k}}$. Define $f(u) = \frac{1}{1- \frac{1-\delta}{N}  u \sum_{k=1}^N e^{c_{k}}} $. Define an operator $\mathfrak G^N: \mathcal{U} \rightarrow \mathcal{U}$ by
\begin{equation}
 \label{ufixed0N}
 \begin{aligned}
\left[ \mathfrak G^N * U \right](T-t, \mathbf{x}, \mathbf{z})&= \frac{1}{\mathcal{V}^N(0)} \E \left[\widehat{X}^{N \star}(T-t) \frac{1}{1- \frac{1-\delta}{N}  \left [e^{c_{\ell}} U(t) + \widetilde{U}^{\star}_t \right] } \Big | \mathcal{F}_t^{MF} \right] \\
&= \frac{1}{\mathcal{V}^N(0)} \E \left[ \widehat{X}^{N \star}(T-t) f (U(t)) | \mathcal{F}_t^{MF}\right],
 \end{aligned}
\end{equation}
where $U := \left(U(t)\right)_{t \in [0,T]} \in \mathcal{U}$. Hence, $\mathfrak G^N *\mathcal{I}(U)(t) = \mathfrak G^N(U)(T-t)$. At the Nash equilibrium, \eqref{ute2N} leads to the fixed-point condition
\begin{equation}
\label{ufixedN}
\left[ \mathfrak G^N * \mathcal{I}( U^N) \right](T-t,\mathbf{x}, \mathbf{z}) = U^N(T-t,\mathbf{x}, \mathbf{z}).
\end{equation}

    \item \textbf{Convergence of $\mathfrak G^N$.}
{For given $(\mathbf{x}, \mathbf{z})$, $\mathcal{V}(0)$ and $\mathcal{V}^N(0)$ are the same by definition. For any $t \in [0,T]$, $\widehat{X}^{N \star}(T-t) f (U^N(t))$ is bounded in  $L^2(\Omega, \mathcal F, \mathbb P)$ as $\widehat{X}^{N \star}(T-t)$ is $L^2$ bounded by Assumption~\ref{xv} and Proposition~\ref{eumckean-1}, and $f (U^N(t))$ is bounded from \eqref{ufixed0N},
for all $ N \in \mathbb{N}$. Then, for any $t \in [0,T]$, the family $ \left\{ \widehat{X}^{N \star}(T-t) f (U^N(t)) \right\}_{N \in \mathbb N} $ is uniformly integrable by the criterion of de la Vall\'ee Poussin (choose a quadratic function as the convex function in the criterion).}{Hence, by rewriting the conditional expectation using continuous test functions, and by the aforementioned joint law convergence and the continuous mapping theorem, the operator \eqref{ufixed0N} follows
\[ 
\begin{aligned}
\lim_{n\rightarrow\infty} \E^{\mathbf{x}, \mathbf{z}} \left[\widehat{X}^{N \star}(T-t) f (U^N(t)) | \mathcal F^{MF}_t \right] &= \E^{\mathbf{x}, \mathbf{z}} \left[\lim_{n\rightarrow\infty} \widehat{X}^{N \star}(T-t) f (U^N(t)) |\mathcal F^{MF}_t \right]\\
&= \E^{\mathbf{x}, \mathbf{z}} \left[\widehat{X}^\star(T-t) f (U(t)) |\mathcal F^{MF}_t \right],     
\end{aligned}
\]
for each $ t \in [0,T]$, $U$ is the mean-field counterpart defined in \eqref{eq: ufixedt}. Note that the strategy $\pi(\cdot)$ defined implicitly in $\widehat{X}(T-t) f (U(t)) $ and $\widehat{X}^N(T-t) f (U^N_t)$ are the same.}

\item \textbf{Convergence of the fixed point solution.} By Theorem~\ref{thm:unique}, $\mathfrak G $ has a unique fixed point $ U \in \mathcal U$. That is, for any $ \widetilde U \in \mathcal{U} $ that is not $ U $, $ \|\mathfrak G(\widetilde U) - \widetilde U\|_{\mathcal U} > \delta $ for some positive $ \delta $. Similarly, $U^N $ is the unique fixed point of $ \mathfrak G^N $ for any $ N $. Hence
\begin{equation}
\label{fixedpt_fg}
    \mathfrak G^N(U^N) = U^N; \quad \mathfrak G(U) = U
\end{equation}
Essentially, we aim to show that the fixed point $U^N $ converges to $ U $ as $ N \to \infty $.

As shown in the previous step, $ \mathfrak G^N $ converges uniformly to $\mathfrak G $ on $\mathcal U$. We have, for any $ \epsilon > 0 $, there exists $ N_0 $ such that for all $ N > N_0 $ and for all $ x  \in [0,1]$, $ \|\mathfrak G^N(x) - \mathfrak G(x)\|_{\mathcal U} < \epsilon $. We next show the fixed point property of $U^N$ through a contradiction argument. Suppose a fixed point solution $\|U^N - U\|_{\mathcal U} > 0$. Then, for $ N > N_0$, $\|\mathfrak G^N(U^N) -U^N\|_{\mathcal U} = \|\mathfrak G^N(U^N) - \mathfrak G(U^N) + \mathfrak G(U^N) -U^N\|_{\mathcal U} \geq \left| \|\mathfrak G(U^N) -U^N\|_{\mathcal U} - \|\mathfrak G^N(U^N) - \mathfrak G(U^N) \|_{\mathcal U} \right| >0$, by the triangle inequality.
This is contradicted with the definition of $U^N$ as the fixed point solution of \eqref{fixedpt_fg}.
\end{enumerate}

\end{proof}

\end{appendices}

\end{document}